\newcommand{\svast}{\bBigg@{3}}
\newcommand{\vast}{\bBigg@{4}}
\newcommand{\Vast}{\bBigg@{5}}
\newtheorem{theorem}{Theorem}
\newtheorem{corollary}{Corollary}[theorem]
\crefname{equation}{eq.}{eqs.}
\crefname{figure}{fig.}{figs.}
\newcommand{\RNum}[1]{\MakeUppercase{\romannumeral#1}} 
\begin{document}

\title{Physical-Layer Security for Two-Hop Air-to-Underwater Communication Systems With Fixed-Gain Amplify-and-Forward Relaying}

\author{
    Yi~Lou,~\IEEEmembership{Member,~IEEE,}
    Ruofan~Sun,~\IEEEmembership{Student Member,~IEEE,}
    Julian~Cheng,~\IEEEmembership{Senior~Member,~IEEE,}
    Songzuo~Liu,~\IEEEmembership{Member,~IEEE,}
    Feng~Zhou,~\IEEEmembership{Member,~IEEE,}
    and~Gang Qiao,~\IEEEmembership{Member,~IEEE}
    \thanks{Yi Lou, Ruofan Sun, Songzuo Liu, Feng Zhou,  and Gang Qiao are with the Acoustic Science and Technology Laboratory, Harbin Engineering University, Harbin 150001, China, and also with the Key Laboratory of Marine Information Acquisition and Security (Harbin Engineering University), Ministry of Industry and Information Technology, Harbin 150001, China (e-mail: \{louyi,sunruofan,liusongzuo,zhoufeng,qiaogang\}@hrbeu.edu.cn).}
    \thanks{Julian Cheng is with the School of Engineering, The University of British Columbia, Kelowna, BC, Canada (e-mail: julian.cheng@ubc.ca).}
}

\maketitle

\begin{abstract}
We analyze a secure two-hop mixed radio frequency (RF) and underwater wireless optical communication (UWOC) system using a fixed-gain amplify-and-forward (AF) relay. The UWOC channel is modeled using a unified mixture exponential-generalized Gamma distribution to consider the combined effects of air bubbles and temperature gradients on transmission characteristics. Both legitimate and eavesdropping RF channels are modeled using flexible $\alpha-\mu$ distributions. Specifically, we first derive both the probability density function (PDF) and cumulative distribution function (CDF) of the received signal-to-noise ratio (SNR) of the mixed RF and UWOC system. Based on the PDF and CDF expressions, we derive the closed-form expressions for the tight lower bound of the secrecy outage probability (SOP) and the probability of non-zero secrecy capacity (PNZ), which are both expressed in terms bivariate Fox's $H$-function. To utilize these analytical expressions, we derive asymptotic expressions of SOP and PNZ using only elementary functions. Also, we use asymptotic expressions to determine the optimal transmitting power to maximize energy efficiency. Further, we thoroughly investigate the effect of levels of air bubbles and temperature gradients in the UWOC channel, and study nonlinear characteristics of the transmission medium and the number of multipath clusters of the RF channel on the secrecy performance. Finally, all analyses are validated using Monte Carlo simulation.
\end{abstract}

\begin{IEEEkeywords}
    Amplify-and-forward (AF), $\alpha$-$\mu$ distribution, non-zero capacity (PNZ), performance analysis, underwater wireless optical communication (UWOC),  secrecy outage probability (SOP). 
\end{IEEEkeywords}

\IEEEpeerreviewmaketitle

\section{Introduction}
\IEEEPARstart{T}{he} rise of the underwater Internet of Things requires the support of a high-performance underwater communication network with high data rates, low latency, and long communication range. Underwater wireless optical communication (UWOC) is one of the essential technologies for this communication network. Unlike radio frequency (RF) and acoustic technologies, UWOC technology can achieve ultra-high data rates of Gpbs over a moderate communication range when selecting blue or green light with wavelengths located in the transmission window \cite{zengSurveyUnderwaterOptical2017}. Further, a light-emitting diode or laser diode as a light source provides the versatility to switch between maximizing the communication range or the coverage area within the constraints of the range-beamwidth tradeoff to meet the needs of a specific application scenario.

Using relay technology to construct a communication system in a multi-hop fashion is one of the primary techniques to extend the communication range. Based on the modality of processing and forwarding signals, relays can be divided into two main categories: decode-and-forward relays (DF) and amplify-and-forward (AF) relays. In DF relaying systems, the relay down-converts the received signals to the baseband, decodes, re-encodes, and up-converts them to the RF band, which are then forwarded to the destination node. In AF relaying systems, the relay amplifies the received signals directly in the passband based on the amplification factor, then forwards them directly in the RF band. Since AF scheme does not require time-consuming decoding and spectral shifting, it can significantly reduce complexity while still providing good performance \cite{dohlerCooperativeCommunicationsHardware2010}. Depending on the different CSI information required by the AF relay, AF relaying can be divided into the variable gain AF (VG) one and fixed-gain AF (FG) one. In a VG scheme, the relay requires instantaneous  channel state information (CSI) of the source-to-relay link, whereas in an FG scheme, only statistical CSI of the SR link is required \cite{shinMRCAnalysisCooperative2008}. Therefore, from an engineering standpoint, the FG scheme is more attractive because of its low implementation complexity.

To maximize the utilization of the different transmission environments of each hop and thus improve the overall performance of the multi-hop relaying system, mixed communication systems using different communication technologies have been proposed. For example, the mixed communication system using both RF and free space optical (FSO) technologies is proposed to take advantage of the robustness of the RF links and the high bandwidth characteristics of the FSO links. Further, RF sub-systems offer low-cost and non-line-of-sight communication capabilities, while FSO sub-systems offer low transmission latency and ultra-high transmission rates. Therefore, a mixed RF/FSO system is a cost-effective solution to the last-mile problem in wireless communication networks, where the high-bandwidth FSO sub-system of a mixed RF/FSO system is used to connect seamlessly the fiber backbone and RF sub-system access networks \cite{upadhyaEffectInterferenceMisalignment2020,leiSecrecyOutageAnalysis2018a,zediniPerformanceAnalysisDualHop2016,djordjevicMixedRFFSO2015}. In the past, achieving ultra-high-speed communication between underwater and airborne nodes or land-based stations across the sea surface medium has been a challenge due to the low data rate of underwater acoustic communications. To solve this problem, using an ocean buoy or a marine ship as a relay node,
the mixed RF/UWOC system is proposed, in which the high-speed UWOC is used instead of underwater acoustic communication, to achieve higher overall communication rates \cite{leiPerformanceAnalysisDualHop2020,illiPhysicalLayerSecurity2020,illiDualHopMixedRFUOW2018,christopoulouOutageProbabilityMultisensor2019,xingAdaptiveEnergyEfficientAlgorithm2018}. 

Accurate modeling of the UWOC channel, including absorption, scattering, and turbulence, is a prerequisite for proper performance analysis and algorithm development of the UWOC system \cite{l.johnsonSurveyChannelModels21,zediniNewSimpleModel04}. Absorption and scattering have been extensively studied \cite{jaruwatanadilokUnderwaterWirelessOptical2008,cochenourTemporalResponseUnderwater2013,
nabaviEmpiricalModelingAnalysis2019}, where absorption limits the transmission distance of underwater light, while scattering diffuses the receiving radius of underwater light transmission and deflects the transmission path, thus reducing the received optical power. Due to changes in the random refractive index variation, turbulence can cause fluctuations in the received irradiance, i.e., scintillation, which can limit the performance and affect the stability of the UWOC system \cite{zengSurveyUnderwaterOptical2017}. In early research, UWOC turbulence was modeled by borrowing  models of atmospheric turbulence, e.g., weak turbulence is modeled by the Lognormal distribution \cite{jamaliPerformanceAnalysisMultiHop2017,jamaliPerformanceStudiesUnderwater2017,nezamalhosseiniOptimalPowerAllocation2020,jiangPerformanceSpatialDiversity2020}, and moderate-to-strong turbulence is modeled by the Gamma-Gamma distribution \cite{luanScintillationIndexOptical2019,boucouvalasUnderwaterOpticalWireless2016,shinStatisticalModelingImpact2020,elamassieVerticalUnderwaterVisible2020}.

However, the statistical distributions used to model atmospheric turbulence cannot accurately characterize UWOC systems due to the fundamental differences between aqueous and atmospheric mediums. Recently, based on experimental data, the mixed exponential-lognormal distribution has been proposed to model moderate to strong UWOC turbulence in the presence of air bubbles in both fresh water and salty water \cite{jamaliStatisticalStudiesFading2018}. Later,  the mixture exponential-generalized Gamma (EGG) distribution was proposed to model turbulence in the presence of air bubbles and temperature gradients in either fresh or salt water \cite{zediniUnifiedStatisticalChannel2019}. The EGG distribution not only can model turbulence of various intensities, but also has an analytically tractable mathematical form. Therefore, useful system performance metrics, such as ergodic capacity, outage probability, and BER, can be easily obtained. 

Due to the broadcast nature of RF signals, secrecy performance has always been one of the most important considerations for the mixed RF/FSO communication systems \cite{hamamrehClassificationsApplicationsPhysical2019,leiSecrecyPerformanceMixed2017,leiSecrecyOutageAnalysis2018,yangPhysicalLayerSecurityMixed2018,leiSecrecyOutageAnalysis2018a,leiSecureMixedRFFSO2020}. In \cite{leiSecrecyPerformanceMixed2017}, the expressions of the lower bound of the secrecy outage probability (SOP) and average secrecy capacity (ASC) for mixed RF/FSO systems using VG or FG relaying schemes, were both derived in closed-form, where the RF and FSO links are modeled by the Nakagami-$m$ and GG distributions, respectively. The authors in \cite{leiSecrecyOutageAnalysis2018} used Rayleigh and GG distributions to model RF and FSO links, respectively. Considering the impact of imperfect channel state information (CSI), both the exact and asymptotic expressions of the lower bound for SOP of a mixed RF/FSO system using VG or FG relay are derived. The same authors then extended the analysis to multiple-input and multiple-output configuration and analyzed the impact of different transmit antenna selection schemes on the secrecy performance of the mixed RF/FSO system using a DF relay, where RF and FSO links are modeled by the Nakagami-$m$ and $\mathcal{M}$-distributions, respectively. Assuming the CSI of the FSO and RF links are imprecise and outdated, the authors derived the bound and asymptotic expressions of the effective secrecy throughput of the system. In \cite{yangPhysicalLayerSecurityMixed2018}, using more generalized $\eta$-$\mu$ and $\mathcal{M}$-distributions to model RF and FSO links, respectively, and assuming that the eavesdropper is only at the relay location, the authors derived the analytical results for the SOP and the average secrecy rate of the mixed RF/FSO system using the FG or VG relaying scheme.  To quantify the impact of the energy harvesting operation on the system secrecy performance, the authors in \cite{leiSecrecyOutageAnalysis2018a} derived exact closed-form and asymptotic expressions for the SOP of the downlink simultaneous wireless information and power transfer system using DF relaying scheme, under the assumption that RF and FSO links are modeled using the Nakagami-$m$ and GG distributions, respectively.

However, research on the secrecy performance of mixed RF/UWOC systems is still in its infancy despite the growing number of underwater communication applications. The authors in \cite{illiDualHopMixedRFUOW2018} investigated the secrecy performance of a two-hop mixed RF/UWOC system using a VG or FG multiple-antennas relay and maximal ratio combining scheme, where RF and UWOC links are modeled by Nakagami-$m$ and the mixed exponential-Gamma (EG) distributions, respectively. Assuming that only the source-to-relay link receives eavesdropping from unauthorized users, the authors in \cite{illiDualHopMixedRFUOW2018} derived the exact closed expressions of the ASC and SOP of the mixed RF/UWOC systems. Later, based on the same channel model as in \cite{illiDualHopMixedRFUOW2018}, the same authors extended the  analysis to the mixed RF/UWOC system using a multi-antennas DF relay with the selection combining scheme \cite{illiPhysicalLayerSecurity2020}. Both the exact closed-form and asymptotic expressions of the SOP were derived.

However, while the EG distribution is suitable for modeling turbulence of various intensities in both fresh water and salty water, this distribution fails to model the effects of air bubbles and temperature gradients on UWOC turbulence \cite{zediniUnifiedStatisticalChannel2019}. Further, the Nakagami-$m$ distribution is only applicable to certain specific scenarios and cannot accurately characterize the effects of the properties of the transmission medium and multipath clusters on channel fading. It is shown that the impact of the medium on the signal propagation is mainly determined by the nonlinearity characteristics of the medium \cite{yacoubAmDistributionPhysical2007}. The $\alpha-\mu$ distribution is a more general, flexible, and mathematically tractable  model of channel fading whose parameters $\alpha$ and $\mu$ are correlated with the nonlinearity of the propagation medium and the number of clusters of multipath transmission, respectively. Further, by setting $\alpha$ and $\mu$ to specific values, the $\alpha-\mu$ distribution can be reduced to several classical channel fading models, including Nakagami-$m$, Gamma, one-sided Gaussian, Rayleigh, and Weibull distributions. Recently, the secrecy performance of a two-hop mixed RF/UWOC system using DF relaying has been analyzed in \cite{louSecrecyOutageAnalysis2020}; however, only the lower bound and asymptotic expressions of the SOP are derived. Furthermore, the overall end-to-end latency of the mixed RF/UWOC communication system is increased by the decoding and forwarding and spectral shifting operations required by DF relaying.

However, to the best of the authors' knowledge, this is the first comprehensive secrecy performance analysis of the mixed RF/UWOC communications system using a low-complexity FG relaying scheme. Unlike previous UWOC channel models that do not adequately characterize the underwater optical propagation and RF channel models that use various simplifying assumptions, we model the RF channels and the UWOC channel using the more general and accurate $\alpha-\mu$ and EGG distributions, respectively, to analyze the effects of a variety of real channel physics phenomena, such as different temperature gradients and levels of air bubbles of UWOC channels and different grades of medium nonlinearity, and the number of multipath clusters of the RF channels on the secrecy performance of the mixed RF/UWOC communication systems. We propose a novel analytical framework to derive the closed-form expressions of the SOP and the non-zero secrecy capacity (PNZ) metrics by the bivariate Fox’s $H$-function. Moreover, our secrecy performance study provides a generalized framework for several fading models for both RF and UWOC channels, such as Rayleigh, Weibull for RF channels and EG and Generalized Gamma for UWOC channels. We first derive the probability density function (PDF) and cumulative distribution function (CDF) of the end-to-end SNR for the mixed RF/UWOC communication system in exact closed-form in terms of bivariate $H$-function. Depending on these expressions, we derive the exact closed-form expressions of the lower bound of the SOP and the PNZ. Furthermore, we also derive asymptotic expressions for both SOP and PNZ containing only simple functions at high SNRs. Also, based on the asymptotic expressions for SOP and PNZ, we provide a straightforward approach to determine the optimal source transmission power to maximize energy efficiency for given performance goals of both SOP and PNZ. Finally, we use Monte Carlo simulation to validate all the derived analytical expressions and theoretical analyses.

The rest of this paper is organized as follows. In Section \RNum{2}, the channel and system models are presented. In Section \RNum{3}, the end-to-end statistics are studied. Both exact and asymptotic expressions for the SOP and PNZ are derived in Section \RNum{4}. The numerical results and discussions are discussed in Section \RNum{5}, which is followed by the conclusion in Section \RNum{6}.

\section{System and Channel Models}
A mixed RF/UWOC system is considered in Fig. 1 where the source node (S) in the air transmits its private data to the legitimate destination node (D) located underwater via a trusted relay node (R), which can be a buoy or a surface ship. The RF channel from S to R and underwater optical channel from the R to the D node is assumed to follow $\alpha-\mu$ and EGG distributions, respectively. During transmission, one unauthorized receiver (E) attempts to eavesdrop on RF signals received by the R. In this paper, we consider a VG AF relay where the relay amplifies the received signal by a fixed factor and then forwards the amplified message to the destination node.

\begin{figure}[!t]
    \includegraphics[width=.45\textwidth]{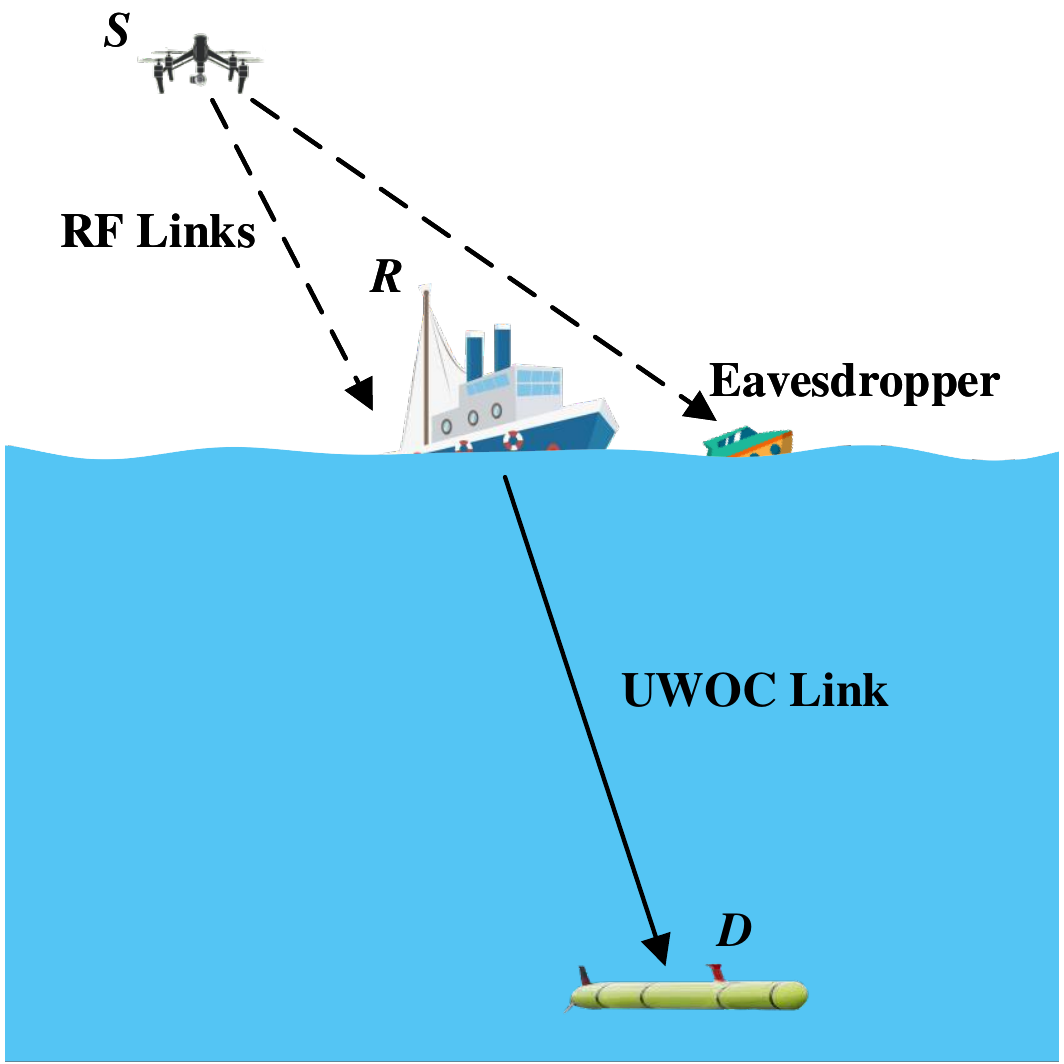}
    \centering
    \caption{A mixed RF/UWOC two-hop communication system using an FG relaying scheme with one legitimate receiver in the presence of eavesdropping.}
    \label{fig: 1}
\end{figure}

\subsection{RF Channel Model}
The RF SR link is modeled by $\alpha$-$\mu$ flat fading models, where the PDF of the received SNR, denoted by $\gamma_1$, can be expressed as
\begin{IEEEeqnarray*}{rcl}
    f_{\gamma _1}\left(\gamma _1\right)&=&\frac{\alpha }{2\Gamma (\mu )}\frac{\mu ^{\mu }}{\left(\bar{\gamma }_1\right)^{\frac{\alpha \mu }{2}}}\gamma _1^{\frac{\alpha \mu }{2}-1}\thinspace\text{exp}\left(-\mu \left(\frac{\gamma _1}{\bar{\gamma }_1}\right)^{\frac{\alpha }{2}}\right)\IEEEyesnumber \label{ampdf}
\end{IEEEeqnarray*}
where $\gamma_1 \geq 0$, $\mu \geq 0$, $\alpha \geq 0$, and $\Gamma(\cdot)$ denotes the gamma function. The fading model parameters $\alpha$ and $\mu$ are associated with the non-linearity and multi-path propagation of the channel. Further, the PDF of the received SNR at the eavesdropping node E, denoted by $f_{\gamma _e}(\gamma _e)$, also follows $\alpha$-$\mu$ with parameters $\alpha_e$ and $\mu_e$.

Based on the definition of the $H$-function, the CDF of $\gamma_1$, which is defined as $F_{\gamma_1}(\gamma_1)=\int_{0}^{\gamma_1} f_{\gamma_1}(\gamma_1) d \gamma_1$, can be expressed as
\begin{IEEEeqnarray*}{rcl}
	 F_{\gamma _1}\left(\gamma _1\right)&\overset{(a)}{=}&\kappa \int _0^{\gamma }  H_{0,1}^{1,0}\left[\gamma  \Lambda \left|\begin{array}{c}  \\ \left(-\frac{1}{\alpha }+\mu ,\frac{1}{\alpha }\right) \\\end{array}\right.\right]d\gamma  \\
	 &=&-\frac{i \kappa }{2 \pi }\int _{\mathcal{L}}^s\!\!\Lambda ^{-s} \Gamma \left(\frac{s}{\alpha }+\mu -\frac{1}{\alpha }\right) \!\!\int _0^{\gamma }\!\!\gamma^{-s}d\gamma ds \\
	 &=&\frac{i \kappa }{2 \pi }\int _{\mathcal{L}}^s\frac{\gamma ^{1-s} \Lambda ^{-s} }{s-1}\Gamma \left(\frac{s}{\alpha }+\mu -\frac{1}{\alpha }\right)ds\\
	 &=&\frac{\kappa  }{\Lambda }H_{1,2}^{1,1}\left[\gamma  \Lambda \left|\begin{array}{c} (1,1) \\ \left(\mu ,\frac{1}{\alpha }\right),(0,1) \\\end{array}\right.\right] \IEEEyesnumber \label{amcdfform2}
\end{IEEEeqnarray*}
where we use \cite[Eq. (1.60)]{mathaiHFunctionTheoryApplications2010} and \cite[Eq. (1.125)]{mathaiHFunctionTheoryApplications2010} to express $f_{\gamma_1}(\gamma_1)$ in the right side of equity (a) into the form of $H$-function, where $H_{\cdot ,\cdot }^{\cdot ,\cdot }[\cdot|\cdot]$ is the $H$-Function \cite[Eq. (1.2)]{mathaiHFunctionTheoryApplications2010}, $\kappa = \frac{\beta }{\Gamma (\mu )\bar{\gamma }}$, $ \Lambda = \frac{\beta }{\bar{\gamma }}$, and $\beta = \frac{\Gamma \left(\frac{1}{\alpha }+\mu \right)}{\Gamma (\mu )} $. 
Note that, the present form of $F_{\gamma _1}\left(\gamma _1\right)$ in \eqref{amcdfform2} is more suitable for deriving secrecy performance of a two-hop mixed RF/UWOC than the form proposed in \cite[Eq. (2)]{kongHighlyAccurateAsymptotic2018} for the point-to-point system over single-input multiple-output $\alpha-\mu$ channels.

\subsection{UWOC channel model}
To characterize the combined effects of different levels of air bubbles and temperature gradients on the light intensity received at underwater node D, we model the UWOC channel from R to D using the EGG distribution \cite{zediniUnifiedStatisticalChannel2019}, where the PDF of the received SNR, defined as $\gamma_2=\left(\eta I\right)^{r} / N_{0}$, has been derived in closed-form in terms of Meijer-$G$ functions \cite[Eq. (3)]{zediniPerformanceAnalysisDualHop2020}. Based on \cite[Eq. (1.112)]{mathaiHFunctionTheoryApplications2010}, we can re-write the PDF of $\gamma_2$ using $H$-functions as
\begin{IEEEeqnarray*}{rcl}
    f_{\gamma _2}(\gamma _2)&=&\frac{c (1-\omega )  }{\gamma  r \Gamma (a)}H_{0,1}^{1,0}\!\!\left[b^{-c}\left(\frac{\gamma _2}{\mu _r}\right)^{\frac{c}{r}}\middle|\!\!\!\begin{array}{c}   \\ (a,1) \\\end{array}\!\!\!\right]\\
    &&+\frac{\omega   }{\gamma_2  r}H_{0,1}^{1,0}\!\!\left[\frac{1}{\lambda }\left(\frac{\gamma _2}{\mu _r}\right)^{\frac{1}{r}}\middle|\!\!\!\begin{array}{c}   \\ (1,1) \\\end{array}\!\!\!\right] \IEEEyesnumber
\end{IEEEeqnarray*}
where the parameters $\omega$, a, b and c can be estimated using the maximum likelihood criterion with expectation maximization algorithm. The parameter $\omega$ is the mixed weight of the distribution; $\lambda$ is the parameter related to the exponential distribution; parameters $a$, $b$, and $c$ are related to the exponential distribution; $r$ is a parameter dependent on the detection scheme, specifically, $r=1$ for heterodyne detection and $r=2$ for intensity modulation and direct detection \cite[Eq. (31)]{lapidothCapacityFreeSpaceOptical2009a}.

The EGG distribution can provide the best fit with the measured data form laboratory water tank experiments in the presence of temperature gradients and air bubbles \cite{zediniPerformanceAnalysisDualHop2020}. Therefore, by using the EGG distribution to model the UWOC link, we can gain more insight into the relationship between characteristics of the UWOC link and the secrecy performance of the mixed RF/UWOC communication system.

Using the definition of CCDF, i.e., $\bar{F}_{\gamma_2}(\gamma_1)=\int_{0}^{\gamma_1} f_{\gamma_1}(\gamma_1) d \gamma_1$, and an approach similar to that used to derive \eqref{amcdfform2}, we can derive the CCDF of $\gamma_2$ as 
\begin{IEEEeqnarray*}{rcl}
	 \bar{F}_{\gamma _2}\left(\gamma _2\right)&=&-\frac{i (1-\omega )}{2 \pi  \Gamma (a)}\int _{\mathcal{L}}^s\!\!\Gamma \left(a+\frac{r s}{c}\right) b^{rs}\mu _r^s \int _{\gamma }^{\infty }\!\!\gamma ^{-s-1}d\gamma ds\\
	 &&-\frac{i \omega }{2 \pi }\int _{\mathcal{L}}^s\Gamma (r s+1) \lambda ^{rs}\mu _r^s \int _{\gamma }^{\infty }\gamma ^{-s-1}d\gamma ds \\
	 &=&-\frac{i (1-\omega ) }{2 \pi  \Gamma (a)}\int _{\mathcal{L}}^s\frac{\gamma ^{-s} }{s}\Gamma \left(a+\frac{r s}{c}\right) \left(b^r \mu _r\right)^sds\\
	 &&-\frac{i\omega  }{2 \pi }\int _{\mathcal{L}}^s\frac{\gamma ^{-s}}{s}\Gamma (r s+1) \left(\lambda ^r \mu _r\right)^sds \\
	 &=&\frac{(1-\omega ) }{\Gamma (a)}H_{1,2}^{2,0}\!\!\left[\frac{b^{-r} \gamma }{\mu _r}\middle|\!\!\!\begin{array}{c} (1,1) \\ (0,1),(a,\frac{r}{c}) \\\end{array}\!\!\!\right]\\
	 &&+ r \omega  H_{0,1}^{1,0}\!\!\left[\frac{\gamma  \lambda ^{-r}}{\mu _r}\middle|\!\!\!\begin{array}{c}   \\ (0,r) \\\end{array}\!\!\!\right]. \IEEEyesnumber \label{eggccdf}
\end{IEEEeqnarray*}
It is worth to mention that the newly derived expression in \eqref{eggccdf} is useful to derive the closed-form CDF expression of the end-to-end SNR of the mixed RF/UWOC communication system.

\section{End-To-End SNR}
In this section, we derive the exact closed-form expressions for PDF and CDF of the end-to-end SNR of  mixed RF/UWOC communication system. We then use these expressions to derive closed-form and asymptotic expressions for the system secrecy metrics in the following section.

The end-to-end instantaneous SNR of the mixed RF/UWOC system using the FG relaying scheme is given as
\begin{IEEEeqnarray*}{rcl}
    \gamma_{e q}=\frac{\gamma_{1} \gamma_{2}}{\gamma_{2}+C}\IEEEyesnumber \label{gammaeq1}
\end{IEEEeqnarray*}
where $C$ denotes the FG amplifying constant and is inversely proportional to the square of the relay transmitting power, and this constant is defined as $C=1 /\left(G^{2} N_{0}\right)$, where the FG amplifying factor $G$ is defined as
\begin{IEEEeqnarray*}{rcl}
	 G=\sqrt{\frac{P_{2}}{P_{1} |h_1|^{2}+N_{1}}}.\IEEEyesnumber
\end{IEEEeqnarray*}

Using the definition of $H$-function, we can express $G$ in terms of the $H$-functions
\begin{IEEEeqnarray*}{rcl}
	G=\kappa  H_{1,2}^{2,1}\left[\Lambda \left|\begin{array}{c} (0,1) \\ (0,1),(-\frac{1}{\alpha }+\mu ,\frac{1}{\alpha }) \\\end{array}\right.\right].\IEEEyesnumber \label{G}
\end{IEEEeqnarray*}

\begin{theorem}
  The CDF of the end-to-end SNR of the mixed RF/UWOC communcation system using the FG relaying scheme $F_{\gamma_{e q}}(\gamma_{e q})$, deﬁned in \eqref{gammaeq1}, can be obtained in exact closed-form as shown in \eqref{gammaeqcdf} in terms of bivariate $H$-functions, where $H_{\cdot,\cdot:\cdot,\cdot;\cdot,\cdot}^{\cdot,\cdot:\cdot,\cdot;\cdot,\cdot}[\cdot|\cdot]$ is the bivariate $H$-Function defined as \cite[Eq. (2.55)]{mathaiHFunctionTheoryApplications2010}
\end{theorem}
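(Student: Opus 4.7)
The plan is to express $F_{\gamma_{eq}}(\gamma)$ as a double Mellin--Barnes contour integral and then recognize the result as a bivariate H-function.

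First, by the total probability formula, conditioning on $\gamma_2 = y$ and noting that the event $\{\gamma_{eq} \le \gamma\}$ is equivalent (for $y > 0$) to $\{\gamma_1 \le \gamma(1 + C/y)\}$, I obtain
\begin{IEEEeqnarray*}{rCl}
    F_{\gamma_{eq}}(\gamma) &=& \int_0^\infty F_{\gamma_1}\!\!\left(\gamma + \frac{\gamma C}{y}\right) f_{\gamma_2}(y)\, dy.
\end{IEEEeqnarray*}
Both $F_{\gamma_1}$ and $f_{\gamma_2}$ are already available in H-function form, from \eqref{amcdfform2} and from the H-function representation underlying \eqref{eggccdf}, respectively.

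Next, I would substitute the Mellin--Barnes contour integral representation of $F_{\gamma_1}(\gamma(1 + C/y))$, which produces the factor $[\gamma(1 + C/y)]^{-s} = \gamma^{-s}(1 + C/y)^{-s}$ inside the integrand. To decouple the $y$-dependence from the contour variable $s$, I would apply the classical Mellin--Barnes identity
\begin{IEEEeqnarray*}{rCl}
    \left(1 + \frac{C}{y}\right)^{\!-s} &=& \frac{1}{\Gamma(s)} \cdot \frac{1}{2\pi i} \int_{\mathcal{L}_2} \Gamma(t)\,\Gamma(s-t)\,C^{-t}\, y^{t}\, dt,
\end{IEEEeqnarray*}
which introduces a second contour variable $t$ and reduces the $y$-dependence to the pure monomial $y^{t}$. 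After swapping the order of integration (justified by absolute convergence on appropriate Bromwich contours), the inner integral $\int_0^\infty y^{t}\, f_{\gamma_2}(y)\, dy$ is the Mellin transform of $f_{\gamma_2}$ at $t+1$. Since $f_{\gamma_2}$ is a sum of two H-functions of type $H_{0,1}^{1,0}$, this Mellin transform evaluates in closed form as a product of gamma functions by the standard Mellin transform formula for the H-function.

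What remains is a double contour integral in $(s,t)$ whose integrand is a ratio of gamma functions multiplied by monomials in $\gamma\Lambda$ and in $C$ (together with constants absorbed from the EGG parameters). This structure matches the definition of the bivariate H-function in \cite[Eq. (2.55)]{mathaiHFunctionTheoryApplications2010} term by term, so one reads off $F_{\gamma_{eq}}(\gamma)$ as a sum of two bivariate H-functions, one per component of $f_{\gamma_2}$. The main obstacle is the final bookkeeping step: choosing the Bromwich contours $\mathcal{L}_1$ and $\mathcal{L}_2$ so that the arguments of every gamma function in the integrand lie in the required half-planes, and then matching the resulting $\Gamma$-product term by term with the parameter sets that define the bivariate H-function in \eqref{gammaeqcdf}.
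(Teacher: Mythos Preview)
Your approach is valid but different from the paper's. The paper conditions on $\gamma_1$ rather than $\gamma_2$: it writes $F_{\gamma_{eq}}(\gamma)=1-\int_\gamma^\infty \bar F_{\gamma_2}\!\left(\tfrac{C\gamma}{x-\gamma}\right)f_{\gamma_1}(x)\,dx$, substitutes $z=x-\gamma$, expands both $H$-functions as Mellin--Barnes integrals, and evaluates the inner integral $\int_0^\infty z^t(z+\gamma)^{-s}\,dz$ via the Euler Beta integral \cite[Eq.~(3.197.1)]{i.s.gradshteynTableIntegralsSeries2007}. This produces the coupling factor $\Gamma(t+1)\Gamma(s-t-1)/\Gamma(s)$ and the leading ``$1$'' of \eqref{gammaeqcdf} automatically, since the complementary probability was split off at the outset.

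In your route there is one point that is more than bookkeeping. The $s$-contour inherited from the $H^{1,1}_{1,2}$ representation of $F_{\gamma_1}$ in \eqref{amcdfform2} separates the poles of $\Gamma(\mu+s/\alpha)$ from those of $\Gamma(-s)$, hence sits in $-\alpha\mu<\operatorname{Re}s<0$. But the identity $(1+C/y)^{-s}=\tfrac{1}{2\pi i\,\Gamma(s)}\int_{\mathcal L_2}\Gamma(t)\Gamma(s-t)(C/y)^{-t}\,dt$ requires a strip $0<\operatorname{Re}t<\operatorname{Re}s$, which is empty when $\operatorname{Re}s<0$. The fix is to shift the $s$-contour to the right past $s=0$; the residue there (using $\Gamma(-s)/\Gamma(1-s)=-1/s$ and $\kappa/\Lambda=1/\Gamma(\mu)$) contributes exactly the constant $1$ after integrating against $f_{\gamma_2}$, and the remaining double integral with $\operatorname{Re}s>0$ then yields the two bivariate $H$-functions in \eqref{gammaeqcdf}. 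So your plan works, but the ``$1-\cdots$'' structure of \eqref{gammaeqcdf} is not free in your formulation: it emerges from that contour shift, whereas the paper's conditioning on $\gamma_1$ makes it explicit from the start.
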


\begin{proof}
See Appendix A. 
\end{proof}

Note that the current implementation of bivariate $H$-function for numerical computation is mature and efficient, including GPU-accelerated versions, and has been implemented using the most popular software, including MATLAB$^\circledR$  \cite{cherguiRicianKfactorbasedAnalysis2018}, Mathematica$^\circledR$ \cite{almeidagarciaCACFARDetectionPerformance2019}, and Python \cite{alhennawiClosedFormExactAsymptotic2016}. Also, the exact-closed expression for the CDF in \eqref{gammaeqcdf} is a key analytical tool to derive the SOP metric of the mixed RF/UWOC system.

\begin{figure*}[!bht]
    \begin{IEEEeqnarray*}{rcl}
	F_{\gamma _{eq}}(\gamma _{eq})&=&1-\frac{\gamma  \kappa  (1-\omega )  }{\Gamma (a)}H_{1,0:0,2;1,1}^{0,1:2,0;0,1}\!\!\left[\begin{array}{c} \frac{b^{-r} C}{\mu _r} \\ \frac{1}{\gamma  \Lambda } \\\end{array}\middle|\!\!\!\begin{array}{ccccc} (2,1,1) & : &   & ; & \left(1+\frac{1}{\alpha }-\mu ,\frac{1}{\alpha }\right) \\
	   & : & (0,1),(a,\frac{r}{c}) & ; & (1,1) \\\end{array}\!\!\!\right]\\
	   &&-\gamma  \kappa  r \omega  H_{1,0:0,2;1,1}^{0,1:2,0;0,1}\!\!\left[\begin{array}{c} \frac{C \lambda ^{-r}}{\mu _r} \\ \frac{1}{\gamma  \Lambda } \\\end{array}\middle|\!\!\!\begin{array}{ccccc} (2,1,1) & : &   & ; & \left(1+\frac{1}{\alpha }-\mu ,\frac{1}{\alpha }\right) \\
	   & : & (1,1),(0,r) & ; & (1,1) \\\end{array}\!\!\!\right].\IEEEyesnumber \label{gammaeqcdf}
	\end{IEEEeqnarray*}
    \hrule
    \begin{IEEEeqnarray*}{rcl}
	f_{\gamma _{eq}}(\gamma _{eq})&=&\frac{\kappa  (1-\omega )  }{\Gamma (a)}H_{1,0:1,1;0,2}^{0,1:0,1;2,0}\!\!\left[\begin{array}{c} \frac{1}{\gamma  \Lambda } \\ \frac{b^{-r} C}{\mu _r} \\\end{array}\middle|\!\!\!\begin{array}{ccccc} (2,1,1) & : & \left(1+\frac{1}{\alpha }-\mu ,\frac{1}{\alpha }\right) & ; &   \\
	   & : & (2,1) & ; & (0,1),(a,\frac{r}{c}) \\\end{array}\!\!\!\right]\\
	   &&+\kappa  r \omega  H_{1,0:1,1;0,2}^{0,1:0,1;2,0}\!\!\left[\begin{array}{c} \frac{1}{\gamma  \Lambda } \\ \frac{C \lambda ^{-r}}{\mu _r} \\\end{array}\middle|\!\!\!\begin{array}{ccccc} (2,1,1) & : & \left(1+\frac{1}{\alpha }-\mu ,\frac{1}{\alpha }\right) & ; &   \\
	   & : & (2,1) & ; & (1,1),(0,r) \\\end{array}\!\!\!\right].\IEEEyesnumber\label{gammaeqpdf}
	\end{IEEEeqnarray*}
	\hrule
	\begin{IEEEeqnarray*}{rcl}
	 \text{SOP}_L&=&1-\frac{\Theta  \kappa  (1-\omega ) \kappa _e }{\Gamma (a) \Lambda _e^2}H_{1,0:1,2;0,2}^{0,1:1,1;2,0}\!\!\left[\begin{array}{c} \frac{\Lambda _e}{\Theta  \Lambda } \\ \frac{b^{-r} C}{\mu _r} \\\end{array}\middle|\!\!\!\begin{array}{ccccc} (2,1,1) & : & \left(1+\frac{1}{\alpha }-\mu ,\frac{1}{\alpha }\right) & ; &   \\
	   & : & \left(\frac{1+\alpha _e \mu _e}{\alpha _e},\frac{1}{\alpha _e}\right),(1,1) & ; & (0,1),(a,\frac{r}{c}) \\\end{array}\!\!\!\right] \\
	 &&-\frac{\Theta  \kappa  r \omega  \kappa _e }{\Lambda _e^2}H_{1,0:1,2;0,2}^{0,1:1,1;2,0}\!\!\left[\begin{array}{c} \frac{\Lambda _e}{\Theta  \Lambda } \\ \frac{C \lambda ^{-r}}{\mu _r} \\\end{array}\middle|\!\!\!\begin{array}{ccccc} (2,1,1) & : & \left(1+\frac{1}{\alpha }-\mu ,\frac{1}{\alpha }\right) & ; &   \\
	   & : & \left(\frac{1+\alpha _e \mu _e}{\alpha _e},\frac{1}{\alpha _e}\right),(1,1) & ; & (1,1),(0,r) \\\end{array}\!\!\!\right]. \IEEEyesnumber \label{soplb}
\end{IEEEeqnarray*}
\hrule
\begin{IEEEeqnarray*}{rcl}
	\text{SOP}_a=1-\frac{\kappa  (1-\omega ) \kappa _e }{\Lambda  \Gamma (a) \Lambda _e}H_{3,2}^{1,3}\!\!\left[\frac{b^r \Lambda _e \mu _r}{C \Theta \Lambda }\middle|\!\!\!\begin{array}{c} (1,1),(1-a,\frac{r}{c}),(1-\mu ,\frac{1}{\alpha }) \\ \left(\mu _e,\frac{1}{\alpha _e}\right),(0,1) \\\end{array}\!\!\!\right]-\frac{\kappa  r \omega  \kappa _e }{\Lambda  \Lambda _e}H_{2,1}^{1,2}\!\!\left[\frac{\lambda ^r \Lambda _e \mu _r}{C \Theta  \Lambda }\middle|\!\!\!\begin{array}{c} (1,r),(1-\mu ,\frac{1}{\alpha }) \\ \left(\mu _e,\frac{1}{\alpha _e}\right) \\\end{array}\!\!\!\right].\IEEEyesnumber\label{sopa}
\end{IEEEeqnarray*}
\hrule
\begin{IEEEeqnarray*}{rcl}
	\text{SOP}_{\text{ae}}&=&1-\frac{(1-\omega ) \alpha _e}{\Gamma (a) \Gamma (\mu ) \Gamma \left(\mu _e\right) \Gamma \left(\alpha _e \mu _e+1\right)}\Gamma\left(\mu +\frac{\alpha _e \mu _e}{\alpha }\right) \left(\frac{\Lambda _e}{\Theta  \Lambda }\right)^{\alpha _e \mu _e} H_{2,1}^{1,2}\!\!\left[\frac{b^r\mu _r}{C}\middle|\!\!\!\begin{array}{c} (1,1),(1-a,\frac{r}{c}) \\ \left(\alpha _e \mu _e,1\right) \\\end{array}\!\!\!\right]\\
	&&-\frac{r \omega  \alpha _e}{\Gamma (\mu ) \Gamma \left(\mu _e\right) \Gamma \left(\alpha _e \mu _e+1\right)}\Gamma \left(\mu +\frac{\alpha_e \mu _e}{\alpha }\right) \left(\frac{\Lambda _e}{\Theta  \Lambda }\right)^{\alpha _e \mu _e} H_{2,1}^{1,2}\!\!\left[\frac{\lambda ^r \mu _r}{C}\middle|\!\!\!\begin{array}{c} (0,1),(1,r) \\ \left(\alpha _e \mu _e,1\right) \\\end{array}\!\!\!\right].\IEEEyesnumber\label{sopae}
\end{IEEEeqnarray*}
\hrule
\end{figure*}

\begin{theorem}
  The PDF of the end-to-end SNR, which is defined in \eqref{gammaeq1}, of the mixed RF/UWOC communication system using the FG relaying scheme, denoted by $f_{\gamma_{e q}}(\gamma_{e q})$, can be obtained in exact closed-form as shown in \eqref{gammaeqpdf}.
\end{theorem}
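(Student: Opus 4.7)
The plan is to mirror the Mellin--Barnes derivation used for Theorem~1, starting from the convolution-type identity for the PDF rather than the CDF. Since $\gamma_1$ and $\gamma_2$ are independent and $\gamma_{eq}=\gamma_1\gamma_2/(\gamma_2+C)$, conditioning on $\gamma_2$ and writing $\gamma_1=\gamma_{eq}(1+C/\gamma_2)$ yields
\[
 f_{\gamma_{eq}}(\gamma)=\int_0^\infty\!\!\bigl(1+\tfrac{C}{\gamma_2}\bigr)\,f_{\gamma_1}\!\bigl(\gamma(1+\tfrac{C}{\gamma_2})\bigr)\,f_{\gamma_2}(\gamma_2)\,d\gamma_2,
\]
which is equivalent to differentiating \eqref{gammaeqcdf} in $\gamma$. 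I would work with this integral representation because it makes the bivariate $H$-function parameters fall out naturally rather than having to interpret the derivative of a bivariate $H$-function at the end.

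First, I would write $f_{\gamma_1}$ as an $H$-function via \cite[Eq.~(1.125)]{mathaiHFunctionTheoryApplications2010} and open its Mellin--Barnes contour in a variable~$s$. The argument $\gamma(1+C/\gamma_2)$ then factors as $\gamma^{-s}(1+C/\gamma_2)^{-s}$, and combining it with the Jacobian $1+C/\gamma_2$ produces the power $(1+C/\gamma_2)^{1-s}$. I would then introduce a second Mellin--Barnes variable~$t$ to represent $(1+C/\gamma_2)^{1-s}$ through the standard beta identity $(1+x)^{-\nu}=\tfrac{1}{2\pi i}\int_{\mathcal{L}}\tfrac{\Gamma(\nu-t)\Gamma(t)}{\Gamma(\nu)}x^{-t}\,dt$, so that the dependence on $\gamma_2$ collapses to the single monomial $\gamma_2^{t}$ inside the $\gamma_2$-integral.

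Second, I would substitute the two-term EGG PDF and evaluate each inner $\gamma_2$-integral against the kernels $H_{0,1}^{1,0}[\,\cdot\,(\gamma_2/\mu_r)^{c/r}\,]$ and $H_{0,1}^{1,0}[\,\cdot\,(\gamma_2/\mu_r)^{1/r}\,]$ using the Mellin transform of the $H$-function \cite[Eq.~(2.8)]{mathaiHFunctionTheoryApplications2010}, which produces the gamma factors $\Gamma(a+rt/c)(b^{-r}C/\mu_r)^{-t}/\Gamma(a)$ and $\Gamma(rt+1)(C\lambda^{-r}/\mu_r)^{-t}$ for the two EGG components. After swapping the order of the double contour integral with the $\gamma_2$-integral (justified on a common vertical strip of analyticity, just as in the CDF derivation) and collecting gamma factors, the resulting double Mellin--Barnes integral matches the bivariate $H$-function definition \cite[Eq.~(2.55)]{mathaiHFunctionTheoryApplications2010} with first argument $1/(\gamma\Lambda)$ and second argument $b^{-r}C/\mu_r$ (resp.\ $C\lambda^{-r}/\mu_r$), reproducing the parameter lists shown in~\eqref{gammaeqpdf}.

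The main obstacle I expect is the bookkeeping of the parameter shifts when identifying the double contour integral with the bivariate $H$-function: in particular, tracking how the additional $(1+C/\gamma_2)^{1-s}$ factor (relative to the CDF derivation, where only $(1+C/\gamma_2)^{-s}$ appears inside $F_{\gamma_1}$) promotes the first upper coefficient of the outer $H$-block from $(1,1)$ to $(2,1)$ and interchanges the positions of the two arguments compared with \eqref{gammaeqcdf}. As a sanity check I would verify that $\int_0^\infty f_{\gamma_{eq}}(\gamma)\,d\gamma$ recovers unity and that termwise integration of \eqref{gammaeqpdf} in $\gamma$ returns \eqref{gammaeqcdf}, both of which amount to adding a single gamma factor $\Gamma(1-t)/\Gamma(2-t)$ to each Mellin--Barnes integrand.
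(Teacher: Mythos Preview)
Your approach is correct but differs from the paper's. The paper does not rebuild the PDF from the conditional density; instead it substitutes the already-established CDF \eqref{gammaeqcdf} into $f_{\gamma_{eq}}(\gamma)=\tfrac{d}{d\gamma}F_{\gamma_{eq}}(\gamma)$, opens each bivariate $H$-function as its double Mellin--Barnes integral, and differentiates the sole $\gamma$-dependent factor $\gamma^{1-t}$ under the integral sign. The resulting factor $(1-t)$ is absorbed via $(1-t)/\Gamma(t)=-1/\Gamma(t-1)$, which is exactly the mechanism that promotes the denominator parameter $(1,1)$ to $(2,1)$, after which the integrand is recognised as the bivariate $H$-function in \eqref{gammaeqpdf}.

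Your route---starting from $f_{\gamma_{eq}}(\gamma)=\int_0^\infty(1+C/\gamma_2)\,f_{\gamma_1}\!\bigl(\gamma(1+C/\gamma_2)\bigr)f_{\gamma_2}(\gamma_2)\,d\gamma_2$, opening $f_{\gamma_1}$ in a first Mellin--Barnes variable, expanding $(1+C/\gamma_2)^{1-s}$ with the beta kernel in a second variable, and then taking the Mellin transform of the EGG density---is a legitimate alternative that reconstructs the same double contour integrand from scratch. What it buys you is independence from Theorem~1 and a transparent explanation of where the coupling factor $\Gamma(s+t-1)$ and the denominator $\Gamma(s-1)$ originate (namely from the beta expansion with $\nu=s-1$), which is precisely the ``parameter shift'' you flag as the main bookkeeping hurdle. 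What the paper's route buys is brevity: once Theorem~1 is in hand, the PDF follows in two lines, and the swap of argument order in \eqref{gammaeqpdf} relative to \eqref{gammaeqcdf} is purely cosmetic. Your proposed sanity checks (integrating \eqref{gammaeqpdf} back to \eqref{gammaeqcdf}) are sound and amount exactly to undoing the $(1-t)/\Gamma(t)\to 1/\Gamma(t-1)$ step.
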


\begin{proof}
See Appendix B. 
\end{proof}

It is worth noting that the PDF expression in \eqref{gammaeqpdf} is the most critical step required to evaluate the PNZ performance metric, as will be shown in the next section.

\section{PERFORMANCE METRICS}
This section presents analytical expressions for the critical secrecy performance metrics of a mixed RF/UWOC communication system, including both SOP and PNZ, in the presence of air bubbles and temperature gradients in the UWOC channel and medium nonlinearity in the RF channel.

\subsection{SOP}
SOP is defined as the probability that the secrecy capacity $C_s$ falls below a target rate of conﬁdential information $R_s$ and it can be expressed as
\begin{IEEEeqnarray*}{rcl}
    S O P\left(R_{s}\right)&=&\operatorname{Pr}\left\{\log _{2}\left(\frac{1+\gamma_{eq}}{1+\gamma_{e}}\right)<R_{s}\right\} \\ &=&\operatorname{Pr}\left\{\gamma_{eq} \leq \Theta \gamma_{e}+\Theta-1\right\} \\
    &=&\int_{0}^{\infty} \!\!\!F_{eq}\left(\Theta \gamma_{e}+\Theta-1\right) f_{e}\left(\gamma_{e}\right) d \gamma_{e}\IEEEyesnumber
\end{IEEEeqnarray*}
where $\Theta=e^{R_{s}}$.

\subsubsection{Lower bound}
Referring to \cite{leiSecrecyCapacityAnalysis2017,kongPhysicalLayerSecurity2019}, a tight lower bound for the SOP can be given as derived as
\begin{IEEEeqnarray*}{rcl}
    S O P_{L}=\int_{0}^{\infty} F_{\gamma_{e q}}(\Theta \gamma) f_{\gamma_{e}}(\gamma) d \gamma.\IEEEyesnumber\label{soplb1}
\end{IEEEeqnarray*}

\begin{theorem}
  The lower bound for the SOP of the mixed RF/UWOC communication  system using the FG relaying scheme deﬁned in \eqref{soplb1} can be obtained in exact closed-form as shown in \eqref{soplb}.
\end{theorem}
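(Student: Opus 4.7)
The plan is to evaluate the one-dimensional integral in \eqref{soplb1} by inserting the exact CDF obtained in \eqref{gammaeqcdf} (with the substitution $\gamma\mapsto\Theta\gamma$), expressing each surviving bivariate $H$-function by its defining double Mellin--Barnes contour integral, and then integrating over $\gamma$ against $f_{\gamma_e}(\gamma)$ to produce a fresh bivariate $H$-function. The constant ``$1$'' in \eqref{gammaeqcdf} at once contributes the leading $1$ of \eqref{soplb}, since $f_{\gamma_e}$ integrates to unity. The remaining two terms in \eqref{gammaeqcdf} are structurally identical, so the argument developed for the first transfers verbatim to the second.

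After $\gamma\mapsto\Theta\gamma$, the first bivariate $H$-function depends on $\gamma$ only through its second argument $1/(\Theta\gamma\Lambda)$. I would replace that $H$-function by its defining double Mellin--Barnes integral \cite[Eq.~(2.55)]{mathaiHFunctionTheoryApplications2010}, with $s,t$ the two contour variables, and swap the real and complex integrals using Fubini. All $\gamma$-dependence then collapses into $\int_0^\infty \gamma^{1+t} f_{\gamma_e}(\gamma)\,d\gamma$, which is the $(2+t)$-th moment of the eavesdropper's $\alpha$-$\mu$ distribution and is available in closed form as a single Gamma factor of the shape $\Gamma\!\left(\tfrac{1+\alpha_e\mu_e}{\alpha_e}+\tfrac{t}{\alpha_e}\right)$ multiplied by explicit powers of $\bar\gamma_e$ and $\mu_e$.

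I would then absorb the powers of $\bar\gamma_e$, $\mu_e$, $\Theta$ and $\Lambda$ arising from this moment into the bases of the two Mellin variables. The $t$-variable's base becomes $\Lambda_e/(\Theta\Lambda)$, exactly the top-left argument of the bivariate $H$-function in \eqref{soplb}, while the $s$-variable's base $b^{-r}C/\mu_r$ carries over unchanged from \eqref{gammaeqcdf}. The overall prefactor $\kappa_e/\Lambda_e^2$ in \eqref{soplb} emerges from the constants $\Theta$, $\bar\gamma_e$, $\mu_e$ and $\Gamma(\mu_e)$ after rearrangement. Matching the new Gamma factor against the definition in \cite[Eq.~(2.55)]{mathaiHFunctionTheoryApplications2010} contributes the extra lower parameter $\left(\tfrac{1+\alpha_e\mu_e}{\alpha_e},\tfrac{1}{\alpha_e}\right)$ in the first-variable row of \eqref{soplb}. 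Finally, executing the same procedure on the second term of \eqref{gammaeqcdf} produces the second term of \eqref{soplb}, with the internal replacements $b^{-r}C/\mu_r\to C\lambda^{-r}/\mu_r$ and $(0,1),(a,r/c)\to(1,1),(0,r)$ carried over unchanged from the end-to-end CDF.

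The main obstacle is parameter bookkeeping: tracking the pole introduced by the moment's Gamma factor, the auxiliary power of $\gamma$ standing outside the original $H$-function, and all the constant prefactors, and then inserting them into the correct slots of the bivariate $H$-function so that the compact $H^{0,1:1,1;2,0}_{1,0:1,2;0,2}$ notation in \eqref{soplb} emerges. A secondary technical concern is selecting the Mellin contours so that the poles of the new Gamma factor remain separated from the original ones and so that the Fubini interchange of the complex and real integrals is justified on the combined integrand.
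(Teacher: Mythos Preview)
Your proposal is correct and follows essentially the same route as the paper's proof: substitute \eqref{gammaeqcdf} (at $\Theta\gamma$) and $f_{\gamma_e}$ into \eqref{soplb1}, open each bivariate $H$-function as a double Mellin--Barnes integral, swap the order of integration, evaluate the inner $\gamma$-integral, and re-identify the result as a bivariate $H$-function. The only cosmetic difference is that you compute the inner $\gamma$-integral directly as a fractional moment of the eavesdropper's $\alpha$-$\mu$ density, whereas the paper writes $f_{\gamma_e}$ in its $H^{1,0}_{0,1}$ form and invokes a Mellin-transform table entry; these produce the same Gamma factor $\Gamma\!\left(\mu_e+\tfrac{1\pm t}{\alpha_e}\right)$ (the sign of $t$ merely reflects your contour-variable convention), so the two arguments coincide.
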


\begin{proof}
See Appendix C. 
\end{proof}

\subsubsection{Asymptotic results}
To gain more insight into the SOP performance and the dependency between the link quality of both RF and UWOC channels, we now derive asymptotic expressions for SOP. We consider two scenarios, namely $\gamma_1 \to \infty$ and $\gamma_e \to \infty$.
\begin{corollary}
For scenarios $\gamma_1 \to \infty$ and $\gamma_e \to \infty$, the asymptotic expressions of SOP of a mixed RF/UWOC communication system using FG relaying scheme can be given as \eqref{sopa} and \eqref{sopae} in terms of $H$-functions, respectively.
\end{corollary}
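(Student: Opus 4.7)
The plan is to derive both asymptotic expressions directly from the exact SOP lower bound in \eqref{soplb} by performing a residue-based asymptotic analysis on each of the bivariate Fox $H$-functions that appear there. I would first express each bivariate $H$-function as a double Mellin--Barnes contour integral
\[
\frac{1}{(2\pi i)^2}\int_{\mathcal{L}_1}\!\!\int_{\mathcal{L}_2}\Xi(s_1,s_2)\,z_1^{-s_1}z_2^{-s_2}\,ds_1\,ds_2,
\]
where $z_1=\Lambda_e/(\Theta\Lambda)$ carries all the dependence on both RF average SNRs and $z_2\in\{b^{-r}C/\mu_r,\,C\lambda^{-r}/\mu_r\}$ depends only on the UWOC parameters. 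The integrand $\Xi(s_1,s_2)$ factors into a coupled block gamma in $s_1+s_2$, two uncoupled RF gammas in $s_1$ coming from the source-to-relay parameters $(\mu,\alpha)$ and the eavesdropper parameters $(\mu_e,\alpha_e)$, and UWOC gammas in $s_2$. This structural separation is what enables a single residue in $s_1$ to collapse the double integral into a single-variable Fox $H$-function in $s_2$.

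For the regime $\bar{\gamma}_1\to\infty$ ($\Lambda\to 0$, so $z_1\to\infty$), I would displace the $s_1$-contour in the direction that forces $z_1^{-s_1}$ to decay and collect the leading residue from the pole string of the coupled block. Because that pole pins a linear relation $s_1+s_2=\text{const}$, the substitution produces a single Mellin--Barnes integral in $s_2$ whose effective argument is the combination $b^r\Lambda_e\mu_r/(C\Theta\Lambda)$ for the generalized-Gamma branch of the EGG UWOC PDF and $\lambda^r\Lambda_e\mu_r/(C\Theta\Lambda)$ for the exponential branch. Rearranging the gamma prefactors that remain and matching them against the Mellin--Barnes definition \cite[Eq.\ (1.2)]{mathaiHFunctionTheoryApplications2010} of the single-variable Fox $H$-function yields the $H_{3,2}^{1,3}$ and $H_{2,1}^{1,2}$ terms in \eqref{sopa}, with the upper and lower parameter lists assembled from the RF and UWOC gammas evaluated at the coupled pole.

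For the regime $\bar{\gamma}_e\to\infty$ ($\Lambda_e\to 0$, so $z_1\to 0$), I would displace the $s_1$-contour in the opposite direction and identify the dominant residue at the leftmost pole of the uncoupled eavesdropper gamma $\Gamma(\mu_e+s_1/\alpha_e)$, located at $s_1=-\alpha_e\mu_e$. Extracting this residue produces the expected leading power factor $z_1^{\alpha_e\mu_e}=(\Lambda_e/(\Theta\Lambda))^{\alpha_e\mu_e}$, which is precisely the pre-factor in \eqref{sopae}, together with the constant $\Gamma(\mu+\alpha_e\mu_e/\alpha)$ coming from the source-to-relay gamma evaluated at this pole. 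The remaining $s_2$-integral then decouples from $s_1$ apart from an overall shift in its parameter list and, by \cite[Eq.\ (1.2)]{mathaiHFunctionTheoryApplications2010}, collapses to the single-variable $H_{2,1}^{1,2}$ with arguments $b^r\mu_r/C$ and $\lambda^r\mu_r/C$ appearing in \eqref{sopae}.

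The main technical obstacle will be the careful identification of which pole dominates in each regime, because the coupled block, the two uncoupled RF gammas, and the UWOC gammas all generate competing strings of poles, and the strip of analyticity of the bivariate Mellin--Barnes integral constrains which contour deformations are admissible. Verifying that the coupled-block pole dominates in the $\bar{\gamma}_1\to\infty$ limit while the uncoupled eavesdropper pole at $s_1=-\alpha_e\mu_e$ dominates in the $\bar{\gamma}_e\to\infty$ limit, and that all other candidate poles contribute strictly higher-order corrections in the respective small parameters, requires a detailed check of the defining hypotheses of the bivariate $H$-function. Once this dominance is established, the residue evaluation and subsequent gamma-function bookkeeping are routine and produce \eqref{sopa} and \eqref{sopae} as stated.
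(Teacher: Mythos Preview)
Your proposal is correct and follows essentially the same approach as the paper's Appendix~D: both start from the exact lower bound \eqref{soplb}, write the two bivariate $H$-functions as double Mellin--Barnes integrals, and extract the dominant residue in the RF contour variable (the coupled-block pole for $\bar{\gamma}_1\to\infty$, the eavesdropper pole $s_1=-\alpha_e\mu_e$ for $\bar{\gamma}_e\to\infty$) to collapse each term to a univariate $H$-function. The only cosmetic difference is that the paper first repackages the inner RF integral as a single-variable $H_{2,2}^{2,1}$ in the argument $\Theta\Lambda/\Lambda_e$ and then invokes the standard small/large-argument expansions \cite[Eqs.\ (1.8.4), (1.5.9)]{kilbasHtransformsTheoryApplications2004}, whereas you perform the residue extraction directly on the double integral---these are equivalent manoeuvres.
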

\begin{proof}
See Appendix D. 
\end{proof}

Note that in contrast to the closed expression of the lower bound of the SOP in \eqref{soplb} in terms of bivariate $H$-functions, which requires numerical evaluation of double line integrals, the asymptotic expressions in \eqref{sopa} and \eqref{sopae} only require numerical calculation of single line integrals, thus reducing the complexity of the calculations. Furthermore, as shown in Section \RNum{5}, for a target SOP performance, the asymptotic expressions in \eqref{sopa} and \eqref{sopae} can be used to determine rapidly the optimal transmitting power to maximize energy efficiency.

In the special case of a two-hop mixed RF/UWOC communication system over Rayleigh RF links and a thermally uniform UWOC channel, we can further simplify the asymptotic expressions in \eqref{sopa} and \eqref{sopae} by setting $c=1$, $\alpha=\alpha_e=1$, $\mu=\mu_e=1$. For example, eq. \eqref{sopa} can be simplified into
\begin{IEEEeqnarray*}{rcl}
	\text{SOP}_a&=&\frac{\kappa  \kappa _e}{\lambda  \Lambda  \Lambda _e^2 \mu _r}\svast(\!-a \lambda  (1-\omega ) \Lambda _e \mu _r \thinspace\text{exp}\left(\frac{C \Theta  \Lambda}{b \Lambda _e \mu _r}\right)\\
	&&\times E_{a+1}\left(\frac{C \Theta  \Lambda }{b \Lambda _e \mu _r}\right)-C \Theta  \Lambda  \omega  \thinspace\text{exp}\left(\frac{C \Theta  \Lambda}{\lambda  \Lambda _e \mu _r}\right) \\
	&&\times E_i\left(-\frac{C \Theta  \Lambda }{\lambda  \Lambda _e \mu _r}\right)\svast)+\frac{\Lambda  \Lambda _e-\kappa \omega  \kappa _e}{\Lambda  \Lambda _e}\IEEEyesnumber
\end{IEEEeqnarray*}
where $E_i(x)$ and $E_n(x)$ both denote the exponential integral \cite[Eq. (8.211.1)]{i.s.gradshteynTableIntegralsSeries2007}

\subsection{PNZ}
PNZ is another critical  metric for to evaluate the secrecy performance of a communication system, which is defined as $\operatorname{Pr}\left(C_{s}>0\right)$, where $C_s$ is the secrecy capacity. PNZ is generally related to channel conditions of all the channels in the mixed RF/UWOC systems. In this section, we derive the exact closed-form and asymptotic expressions for PNZ and analyze the relationship between channel parameters and PNZ performance.

\subsubsection{Exact results}
According to \cite{kongPhysicalLayerSecurity2019}, PNZ can be reformed as
\begin{IEEEeqnarray*}{rcl}
	\mathcal{P}_{n z}=\operatorname{Pr}\left(\gamma_{eq}>\gamma_{e}\right)=\int_{0}^{\infty} f_{eq}\left(\gamma_{eq}\right) F_{e}\left(\gamma_{eq}\right) d \gamma_{eq}. \IEEEyesnumber \label{pnz1}
\end{IEEEeqnarray*}
\begin{theorem}
  The exact PNZ of the mixed RF/UWOC communication system using the FG relaying scheme deﬁned in \eqref{pnz1} can be obtained in exact closed-form as shown in \eqref{pnz}.
\end{theorem}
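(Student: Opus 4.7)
My approach is to evaluate \eqref{pnz1} directly by substituting the PDF of $\gamma_{eq}$ from Theorem~2, eq.~\eqref{gammaeqpdf}, together with the CDF of $\gamma_e$, which has the same functional form as \eqref{amcdfform2} after relabelling $(\alpha,\mu,\bar{\gamma}_1)\mapsto(\alpha_e,\mu_e,\bar{\gamma}_e)$. Because the PDF in \eqref{gammaeqpdf} is a sum of two bivariate $H$-functions whose first argument contains $1/(\gamma\Lambda)$, and the CDF of $\gamma_e$ is a single univariate $H$-function in $\gamma\Lambda_e$, the PNZ integral splits into two structurally identical pieces, each of the form $\int_0^\infty H^{0,1:0,1;2,0}_{1,0:1,1;0,2}[\,\cdot\,,\,\cdot\,]\,H^{1,1}_{1,2}[\,\cdot\,]\,d\gamma$, with the inert second argument of the bivariate $H$-function taking the values $b^{-r}C/\mu_r$ and $C\lambda^{-r}/\mu_r$, respectively.

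The core step is then to unfold every $H$-function through its Mellin--Barnes contour representation (\cite[Eq.~(1.2)]{mathaiHFunctionTheoryApplications2010} and \cite[Eq.~(2.55)]{mathaiHFunctionTheoryApplications2010}), apply Fubini to push the $\gamma$ integral inside the triple contour integral in the Mellin variables $(s,s_1,s_2)$, collect the resulting exponents of $\gamma$, and evaluate the innermost power integral. The latter contributes an elementary factor of the form $1/(s+\text{const})$ that merges naturally with the Gamma factors from the original $H$-kernels; re-reading the resulting triple contour integral against the defining template of \cite[Eq.~(2.55)]{mathaiHFunctionTheoryApplications2010} yields the claimed bivariate $H$-function expression \eqref{pnz}. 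As an independent sanity check, I would exploit the probabilistic identity $\mathcal{P}_{nz}=\Pr(\gamma_{eq}>\gamma_e)=1-\Pr(\gamma_{eq}\leq\gamma_e)$ for continuous fading, observe that $\Pr(\gamma_{eq}\leq\gamma_e)$ coincides with the integral in \eqref{soplb1} evaluated at $\Theta=1$, and verify that specialising \eqref{soplb} to $\Theta=1$ and taking its complement reproduces \eqref{pnz} with the expected parameter mapping; this also pins down constants and sign conventions.

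The principal obstacle is justifying the interchange of the $\gamma$ integral with the Mellin--Barnes contours: the naive inner integral $\int_0^\infty \gamma^{\rho-1}\,d\gamma$ is pointwise divergent, so the identity is meaningful only if the contours $\mathcal{L}_1,\mathcal{L}_2,\mathcal{L}_3$ are placed in a joint set of vertical strips that (i)~keeps every Gamma argument analytic and (ii)~forces decay of the combined Mellin--Barnes integrand at $|{\rm Im}\,s|\to\infty$. Pinning down this admissible parameter region from the half-plane constraints contributed jointly by the bivariate kernel of the PDF and the univariate kernel of the eavesdropper CDF, and then invoking analytic continuation to extend the identity to the full physical parameter range, is the delicate analytic step; once this is settled, the final identification of the collapsed triple integral as the bivariate $H$-function displayed in \eqref{pnz} is a mechanical matching of Mellin parameters.
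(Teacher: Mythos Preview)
Your proposal is correct and aligns with the paper's proof (Appendix~E): both substitute \eqref{gammaeqpdf} and the eavesdropper CDF of the form \eqref{amcdfform2} into \eqref{pnz1}, split into two terms $T_1+T_2$, open the bivariate $H$-function as a double Mellin--Barnes integral, swap with the $\gamma$ integral, and re-identify the result as the bivariate $H$-functions in \eqref{pnz}. The one technical difference is that the paper does \emph{not} unfold the univariate $H^{1,1}_{1,2}$ factor coming from $F_{\gamma_e}$; it leaves that factor intact and evaluates the innermost integral $\int_0^\infty \gamma^{-s}\,H^{1,1}_{1,2}[\gamma\Lambda_e\mid\cdot\,]\,d\gamma$ directly as the Mellin transform of an $H$-function via \cite[Eq.~2.25.2/1]{vermaIntegralsInvolvingMeijer1966}, which produces a single Gamma factor that is absorbed into the remaining double contour integral. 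This neatly sidesteps the divergent bare power integral $\int_0^\infty \gamma^{\rho-1}\,d\gamma$ you flag as the principal obstacle: with one $H$-function left in place the $\gamma$ integral is absolutely convergent on a strip in $s$, so no delicate contour-placement or analytic-continuation argument is required. Your sanity check via $\mathcal{P}_{nz}=1-\mathrm{SOP}_L|_{\Theta=1}$ is not in the paper but is a valid and efficient cross-verification.
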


\begin{proof}
See Appendix E. 
\end{proof}

\subsubsection{Asymptotic results}
To gain more insight into the PNZ performance and the dependency between the link quality of both RF and UWOC channels, we now derive asymptotic expressions for PNZ. We consider two scenarios, namely $\gamma_1 \to \infty$ and $\gamma_e \to \infty$.
\begin{corollary}
For scenarios $\gamma_1 \to \infty$ and $\gamma_e \to \infty$, the asymptotic expressions of PNZ of a mixed RF/UWOC communication system using the FG relaying scheme are given as \eqref{pnza} and \eqref{pnzae} in terms of $H$-functions, respectively.
\end{corollary}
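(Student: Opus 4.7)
The plan is to mirror the asymptotic strategy used for Corollary~1 (SOP), starting from the exact PNZ integral \eqref{pnz1} and leveraging the bivariate $H$-function representation of $f_{\gamma_{eq}}$ established in \eqref{gammaeqpdf} together with the $H$-function CDF of $\gamma_e$ given by the $\alpha-\mu$ form \eqref{amcdfform2} (with parameters $\alpha_e,\mu_e$). Each of the two bivariate $H$-functions in \eqref{gammaeqpdf} will be written as a double Mellin--Barnes contour integral, the $\gamma_{eq}$-integration against $F_{\gamma_e}(\gamma_{eq})$ will be carried out inside the contour, and then the remaining double integral will be evaluated asymptotically by residue extraction.

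For the regime $\gamma_1 \to \infty$, i.e.\ $\bar{\gamma}_1 \to \infty$ so that $\Lambda \to 0$, the argument $1/(\gamma_{eq}\Lambda)$ that appears in $f_{\gamma_{eq}}$ grows large. I would use the standard large-argument asymptotic expansion of the $H$-function, keeping only the dominant residue at the rightmost pole of the $\Gamma(\mu - s/\alpha + 1/\alpha)$-type factor associated with the $\alpha-\mu$ source-to-relay link. This collapses the double contour to a single contour, and after recombining with the surviving $\gamma_e$-integration over the $\alpha-\mu$ CDF, the two terms (weights $1-\omega$ and $\omega$) of the EGG distribution yield the two single $H$-function pieces of \eqref{pnza}.

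For the regime $\gamma_e \to \infty$, i.e.\ $\bar{\gamma}_e \to \infty$, the dominant contribution comes from the small-argument behavior of $F_{\gamma_e}$ once the change of variables is inserted. Expanding that CDF to leading order produces a prefactor $(\Lambda_e/(\Theta\Lambda))^{\alpha_e \mu_e}$ times a $\Gamma(\mu + \alpha_e\mu_e/\alpha)$ factor, obtained by evaluating the Mellin integral at the leading pole determined by the eavesdropper parameters. The EGG-induced Mellin integrand then remains, giving the two $H_{2,1}^{1,2}$ terms of \eqref{pnzae} after simplification, exactly paralleling the structure of \eqref{sopae}.

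The main obstacle will be bookkeeping: one must carefully track which pole dominates in each asymptotic regime, ensure the order of integration and residue extraction is justified (contour deformation is legitimate once the gamma-product decays along the vertical line), and verify that the $\omega$ and $1-\omega$ components of the EGG mixture are handled independently without mixing their pole structures with those of the $\alpha-\mu$ factors. A subsidiary difficulty is that \eqref{pnz1} involves $f_{eq}\,F_e$ rather than $F_{eq}\,f_e$, so the roles of PDF and CDF are swapped relative to the SOP derivation in Appendix~D; I would accordingly apply the Mellin-type identity $\int_0^\infty \gamma^{-s}\, F_{\gamma_e}(\gamma)\,d\gamma$ (after reciprocation) rather than the one used in Appendix~D, but the residue mechanics and the final single $H$-function forms remain structurally identical to \eqref{sopa} and \eqref{sopae}.
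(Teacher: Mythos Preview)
Your approach is correct in spirit and would reach the same expressions, but it is more roundabout than the paper's. The paper's proof consists of a single observation: the closed-form PNZ in \eqref{pnz} (already established in Theorem~4) has exactly the same bivariate $H$-function structure as the SOP lower bound \eqref{soplb}, differing only by the absence of the factor $\Theta$ in the argument $\Lambda_e/\Lambda$. Hence the residue-extraction machinery of Appendix~D applies verbatim with $\Theta\mapsto 1$, yielding \eqref{pnza} and \eqref{pnzae} directly from \eqref{pnz}. You instead propose to start from the integral \eqref{pnz1} with $f_{eq}\,F_e$, expand $f_{eq}$ as a double Mellin--Barnes integral, carry out the $\gamma_{eq}$-integration, and only then extract residues; this effectively re-derives Theorem~4 as an intermediate step before doing the asymptotics. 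The ``subsidiary difficulty'' you flag---the swap of PDF and CDF roles relative to the SOP derivation---is therefore a non-issue in the paper's route, since Appendix~D operates on the final bivariate $H$-function rather than on the original integral. One small slip in your write-up: PNZ carries no secrecy-rate threshold, so the prefactor in the $\gamma_e\to\infty$ regime should involve $(\Lambda_e/\Lambda)^{\alpha_e\mu_e}$ (cf.\ \eqref{pnzae}), not $(\Lambda_e/(\Theta\Lambda))^{\alpha_e\mu_e}$.
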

\begin{proof}
Observing that the expressions for the lower bound of the SOP in \eqref{soplb} and exact PNZ in \eqref{pnz} have a similar structure; therefore, eqs. \eqref{pnza} and \eqref{pnzae} can be easily obtained using the same techniques as those used for deriving \eqref{sopa} and \eqref{sopae}, and the proof is complete.
\end{proof}

Note that, similar to the asymptotic expressions of the SOP in \eqref{sopa} and \eqref{sopae}, for a target PNZ performance, the asymptotic expressions of PNZ in \eqref{pnza} and \eqref{pnzae} are also suitable for fast numerical calculations and useful to determine the optimal transmitting power to maximize energy efficiency.

\begin{figure*}[!bht]
\begin{IEEEeqnarray*}{rcl}
	\mathcal{P}_{n z}&=&\frac{\kappa  (1-\omega ) \kappa _e }{\Gamma (a) \Lambda _e^2}H_{1,0:0,2;1,2}^{0,1:2,0;1,1}\!\!\left[\begin{array}{c} \frac{b^{-r} C}{\mu _r} \\ \frac{\Lambda _e}{\Lambda } \\\end{array}\middle|\!\!\!\begin{array}{ccccc} (2,1,1) & : &   & ; & \left(1+\frac{1}{\alpha }-\mu ,\frac{1}{\alpha }\right) \\
	   & : & (0,1),(a,\frac{r}{c}) & ; & \left(\frac{1+\alpha _e \mu _e}{\alpha _e},\frac{1}{\alpha _e}\right),(1,1) \\\end{array}\!\!\!\right]\\
	   &&+\frac{\kappa  r \omega  \kappa _e }{\Lambda _e^2}H_{1,0:0,2;1,2}^{0,1:2,0;1,1}\!\!\left[\begin{array}{c} \frac{C \lambda ^{-r}}{\mu _r} \\ \frac{\Lambda _e}{\Lambda } \\\end{array}\middle|\!\!\!\begin{array}{ccccc} (2,1,1) & : &   & ; & \left(1+\frac{1}{\alpha }-\mu ,\frac{1}{\alpha }\right) \\
	   & : & (1,1),(0,r) & ; & \left(\frac{1+\alpha _e \mu _e}{\alpha _e},\frac{1}{\alpha _e}\right),(1,1) \\\end{array}\!\!\!\right].\IEEEyesnumber \label{pnz}
\end{IEEEeqnarray*}
\hrule
\begin{IEEEeqnarray*}{rcl}
	\mathcal{P}_{n z a}&=&\frac{\kappa  r \omega  \kappa _e }{\Lambda  \Lambda _e}H_{2,1}^{1,2}\!\!\left[\frac{\lambda ^r \Lambda _e \mu _r}{C \Lambda }\middle|\!\!\!\begin{array}{c} (1,r),(1-\mu ,\frac{1}{\alpha }) \\ \left(\mu _e,\frac{1}{\alpha _e}\right) \\\end{array}\!\!\!\right]+\frac{\kappa  (1-\omega ) \kappa _e }{\Lambda  \Gamma (a) \Lambda _e}H_{3,2}^{1,3}\!\!\left[\frac{b^r \Lambda _e \mu _r}{C \Lambda }\middle|\!\!\!\begin{array}{c} (1,1),(1-a,\frac{r}{c}),(1-\mu ,\frac{1}{\alpha }) \\ \left(\mu _e,\frac{1}{\alpha _e}\right),(0,1) \\\end{array}\!\!\!\right].\IEEEyesnumber \label{pnza}
\end{IEEEeqnarray*}
\hrule
\begin{IEEEeqnarray*}{rcl}
	\mathcal{P}_{n z a e}&=&\frac{\kappa  r \omega  \alpha _e \kappa _e \Lambda ^{-\alpha _e \mu _e-1} \Lambda _e^{\alpha _e \mu _e-1} \Gamma \left(\mu +\frac{\alpha _e \mu_e}{\alpha }\right) }{\Gamma \left(\alpha _e \mu _e+1\right)}H_{2,1}^{1,2}\!\!\left[\frac{\lambda ^r \mu _r}{C}\middle|\!\!\!\begin{array}{c} (0,1),(1,r) \\ \left(\alpha _e \mu _e,1\right) \\\end{array}\!\!\!\right]\\
	&&+\frac{\kappa  (1-\omega ) \alpha _e \kappa _e \Lambda ^{-\alpha _e \mu _e-1} \Lambda _e^{\alpha _e \mu _e-1} \Gamma \left(\mu +\frac{\alpha_e \mu _e}{\alpha }\right) }{\Gamma (a) \Gamma \left(\alpha _e \mu _e+1\right)}H_{2,1}^{1,2}\!\!\left[\frac{b^r \mu _r}{C}\middle|\!\!\!\begin{array}{c} (1,1),(1-a,\frac{r}{c}) \\ \left(\alpha _e \mu _e,1\right) \\\end{array}\!\!\!\right].\IEEEyesnumber \label{pnzae}
\end{IEEEeqnarray*}
\hrule
\end{figure*}

\section{Numerical Results And Discussion}
 In this section, we provide some numerical results to verify the analytic and asymptotic expressions of SOP and PNZ derived in Section \RNum{4}, and thoroughly investigate the combined effect of the channel quality of both RF and UWOC channels on the secrecy performance of the two-hop mixed RF/UWOC communication system. All practical environmental physical factors that can affect channel quality, including levels of air bubbles, temperature gradients, and salinity of the UWOC channel, as well as the medium nonlinearity and multipath cluster characteristics of the RF channel, are taken into account. For brevity, we use $[\cdot, \cdot]$ to denote the value set of $[\text{air bubbles level}, \text{temperature gradient}]$ in this section.

In Fig. \ref{fig.2} -- Fig. \ref{fig.6}, we investigate the combined effect of the channel quality of both RF and UWOC channels on the SOP metric of the  two-hop mixed RF/UWOC communication system.

\begin{figure}[!htp]
\centering
\includegraphics[width=3.45in]{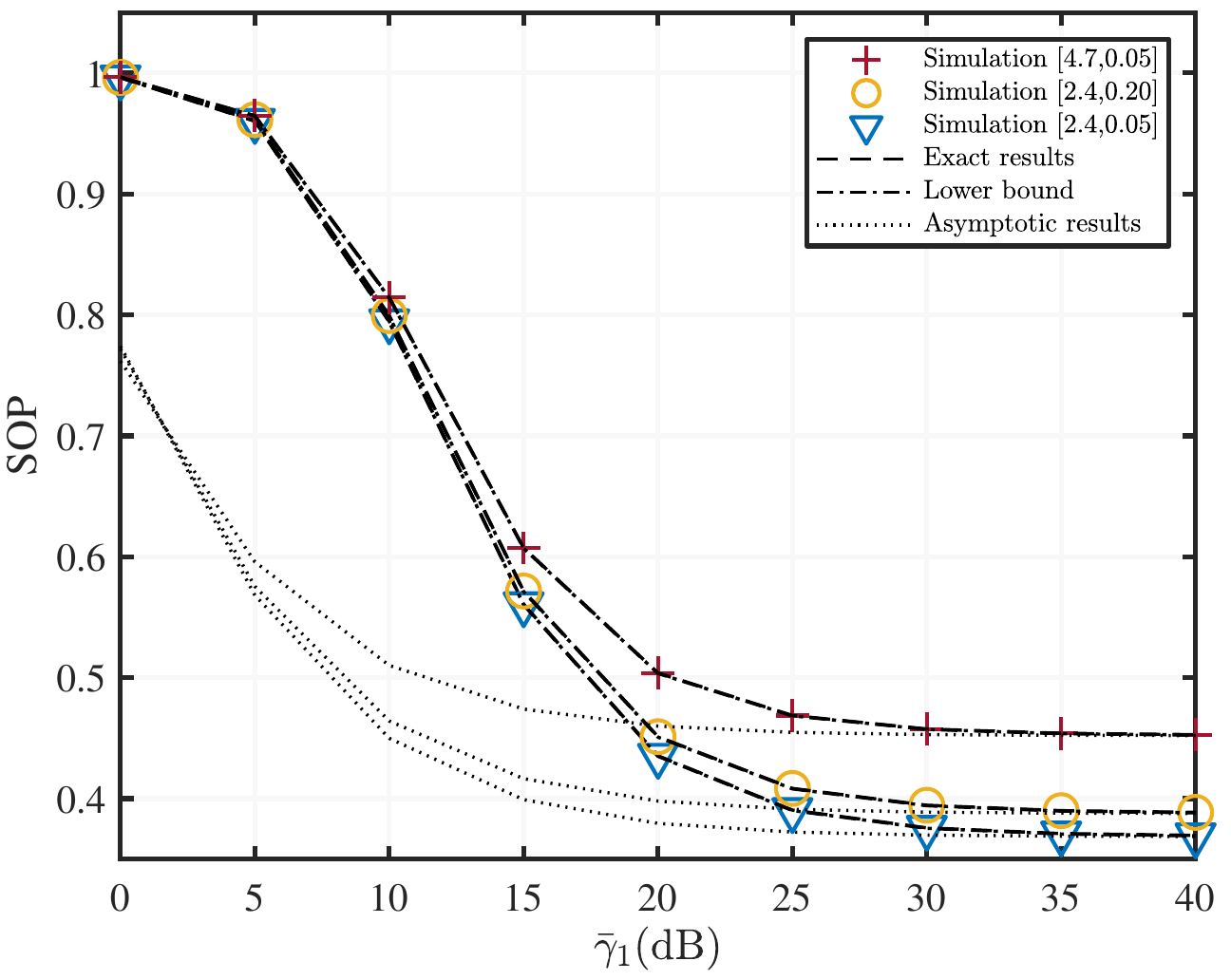}
\caption{SOP versus $\bar{\gamma}_1$ with various fading parameters when $\alpha=\alpha_e=1.6$, $\mu=\mu_e=1.5$, $R_s=0.01$, and $\bar{\gamma}_e=\bar{\gamma}_2
=10$ dB}
\label{fig.2}
\end{figure}

\begin{figure}[!htp]
\centering
\includegraphics[width=3.45in]{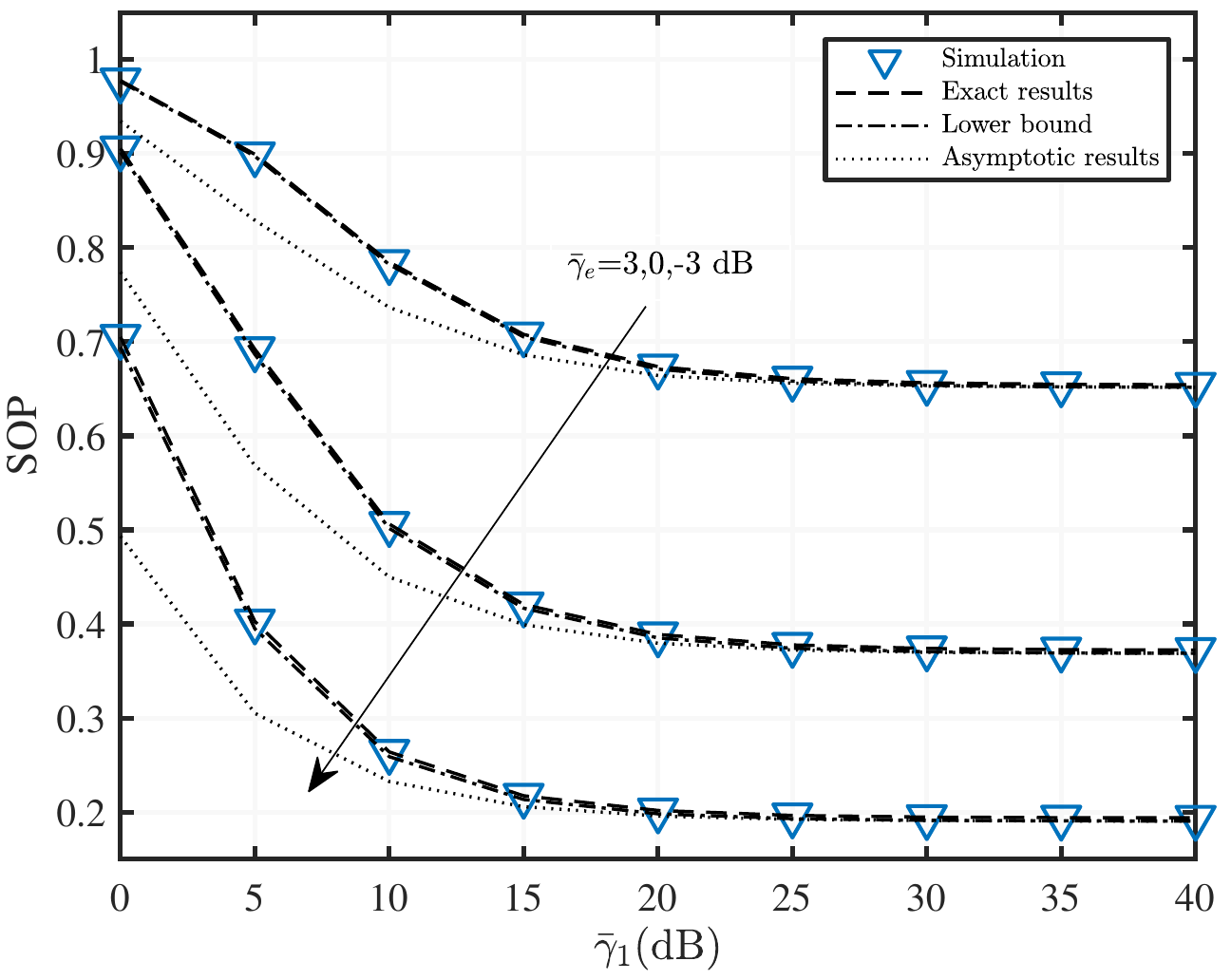}
\caption{SOP versus $\bar{\gamma}_1$ with various fading parameters when $\alpha=\alpha_e=1.6$, $\mu=\mu_e=1.5$, $R_s=0.01$, and $\bar{\gamma}_2=0$ dB}
\label{fig.3}
\end{figure}

\begin{figure}[!htp]
\centering
\includegraphics[width=3.45in]{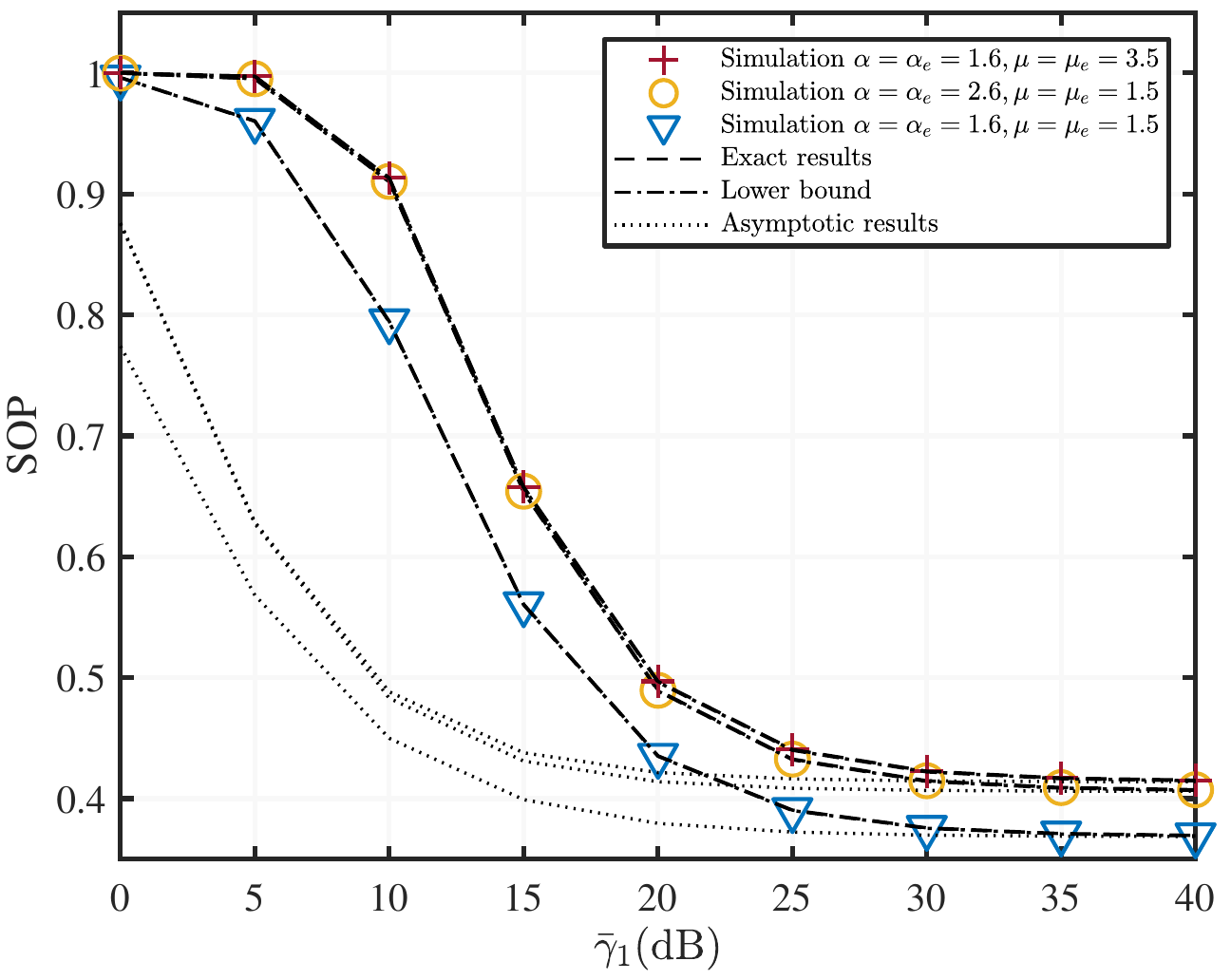}
\caption{SOP versus $\bar{\gamma}_1$ with various fading parameters $R_s=0.01$, $\bar{\gamma}_2=\bar{\gamma}_e=10$ dB, and UWOC channel parameter is [2.4,0.05]}
\label{fig.4}
\end{figure}

\begin{figure}[!htp]
\centering
\includegraphics[width=3.45in]{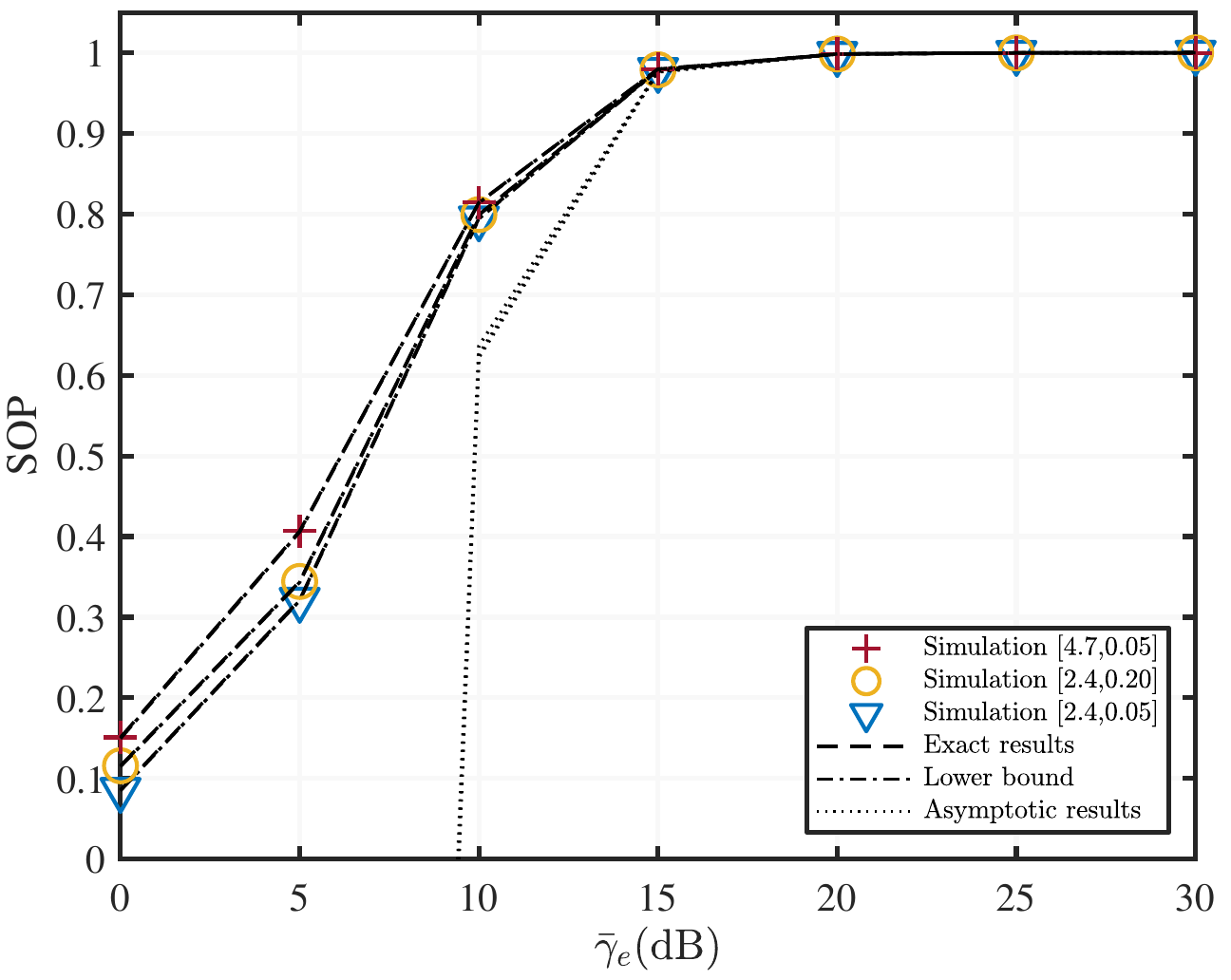}
\caption{SOP versus $\bar{\gamma}_e$ with various fading parameters when $\alpha=\alpha_e=1.6$, $\mu=\mu_e=1.5$, $R_s=0.01$, and $\bar{\gamma}_1=\bar{\gamma}_2=10$ dB}
\label{fig.5}
\end{figure}

\begin{figure}[!htp]
\centering
\includegraphics[width=3.45in]{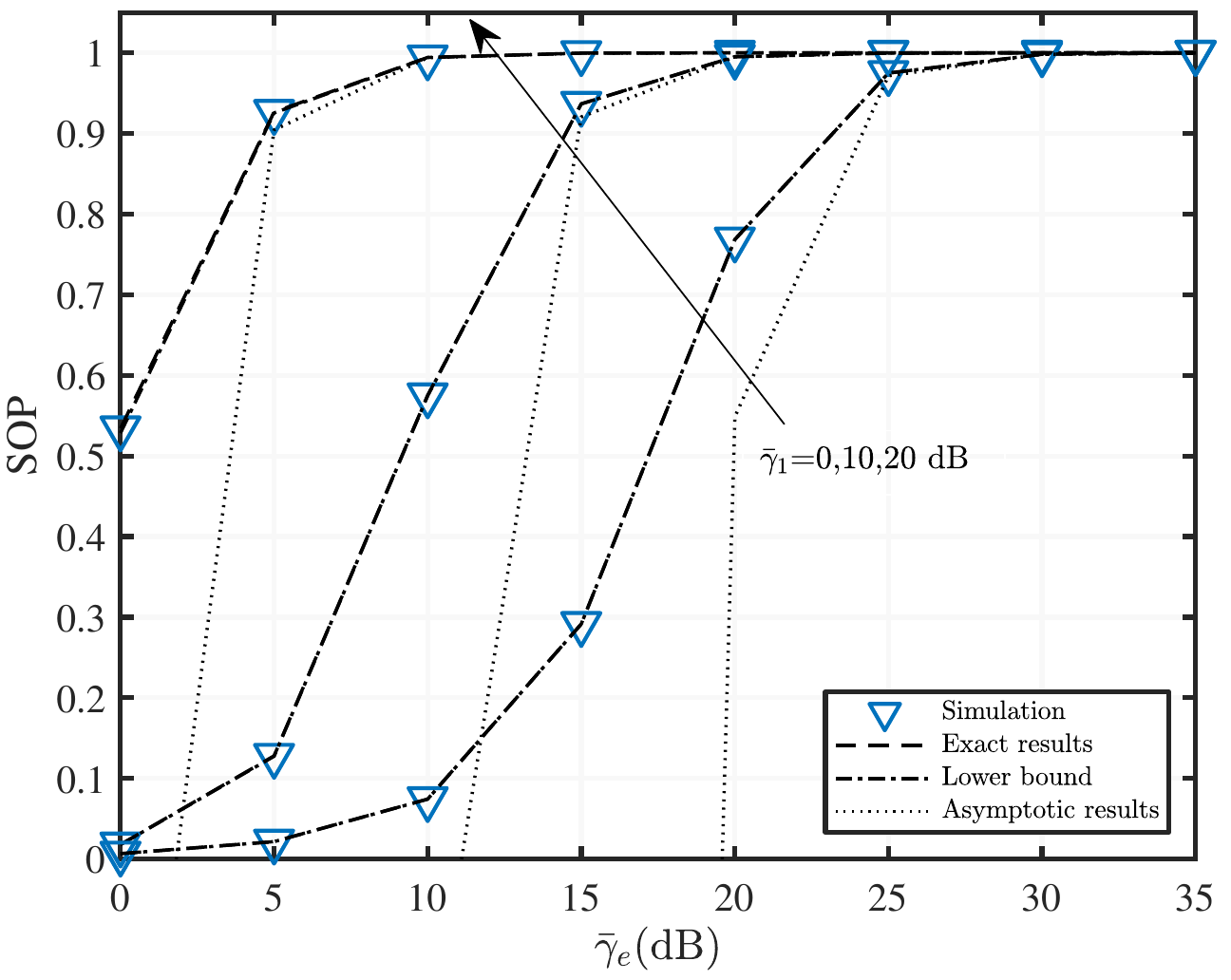}
\caption{SOP versus $\bar{\gamma}_e$ with various fading parameters when $\alpha=\alpha_e=1.6$, $\mu=\mu_e=1.5$, $R_s=0.01$, and $\bar{\gamma}_2=20$ dB}
\label{fig.6}
\end{figure}

Figure \ref{fig.2} shows the lower bound and the asymptotic SOP with average SNR of the SR link $\gamma_1$ for a mixed two-hop RF/UWOC system under different quality scenarios of UWOC channel. Both RF SR and SE links follow the $\alpha$-$\mu$ distribution and have the same parameters, where $\alpha=\alpha_e=1.6$, $\mu=\mu_e=1.5$. The average SNR of the SE and RD links are both set as $\bar{\gamma}_2=\bar{\gamma}_e=10$ dB. As shown in Fig. \ref{fig.2}, the exact theoretical results are almost identical to the simulation results, and both closely agree with the derived lower bound. Asymptotic results are tight when the average SNR is greater than 30 dB. Further, when the average SNR increases from 0 to 30 dB, SOP rapidly decreases. Also, SOP tends to saturate when the average SNR is between 30 and 40 dB. Given the cost of the relay battery replacement and engineering difficulties, the communication system should guarantee the SOP while cutting down on energy consumption. In practice, one should therefore select the optimal transmission power corresponding to the saturation starting point.

Figure \ref{fig.3} depicts the SOP variation versus the SR average SNR $\gamma_1$ for the mixed two-hop RF/UWOC system under three different eavesdropper interference levels, i.e., $\bar{\gamma}_e=3,0,-3$ dB. Parameters in Fig. \ref{fig.3} are set as follows: $\alpha=\alpha_e=1.6$, $\mu=\mu_e=1.5$, UWOC channel parameter is [2.4,0.05], and $\bar{\gamma}_2=0$ dB. It can be observed that the lower bounds closely match the exact results in the whole SNR region. The asymptotic result curve gradually coincides with the exact result curve when $\bar{\gamma}_1$ takes higher values starting from 20 dB. We can also observe that the SOP is monotonically increasing with $\bar{\gamma}_1$, assuming that the SNR of the SE link is a fixed value. Holding the other parameters constant, the larger the $\bar{\gamma}_e$, the lower the system SOP. In short, as the quality of the eavesdropping channel improves, the SOP performance of the system deteriorates.

Figure \ref{fig.4} indicates the effect of the variation in average SNR of the SR link on the SOP metric of a two-hop mixed RF/UWOC, with three different RF channel qualities. Evidently, SOP monotonically decreases with the increase of $\bar{\gamma}_1$, and SOP tends to saturate when $\bar{\gamma}_1 \geq$ 30 dB. Besides, Fig. \ref{fig.4} depicts that as the $\alpha-\mu$ value increases, the two-hop mixed RF/UWOC system secrecy performance worsens, and vice versa. This is because of the phenomena of severe nonlinearity and sparse clustering when the signals are propagating in a high $\alpha-\mu$ value RF channel, and poor RF channel quality makes it easier for eavesdroppers to intercept signals. As shown in Fig. \ref{fig.5}, as the $\bar{\gamma}_e$ progressively increases, the SOP value increases, the information intercepted by the eavesdropper increases, and the system's secrecy performance gradually decreases. Moreover, the asymptotic result is more accurate at $\bar{\gamma}_e$ greater than 15 dB.

In Fig. \ref{fig.6}, we set the same channel parameters as in Fig. \ref{fig.3}, except for setting the UWOC average SNR, i.e., $\bar{\gamma}_2=20$ dB. Fig. \ref{fig.6} shows that SOP increases with $\bar{\gamma}_e$ when the other parameters remain unchanged. The same interpretation of Fig. \ref{fig.5} can also be applied to Fig. \ref{fig.6}. Additionally, the rate at which the asymptotic results approach exact results varies for different SR average SNR. For $\bar{\gamma}_1=20$ dB, the asymptotic results begin to match the exact result starting at $\bar{\gamma}_e=5$ dB. Besides, the close match of the lower bound and the exact results demonstrates the robustness and accuracy of \eqref{soplb}.

In Fig. \ref{fig.7} -- Fig. \ref{fig.10}, We investigate the combined effect of the channel quality of both RF and UWOC channels on the PNZ metric of the  two-hop mixed RF/UWOC communication system.

\begin{figure}[!htp]
\centering
\includegraphics[width=3.45in]{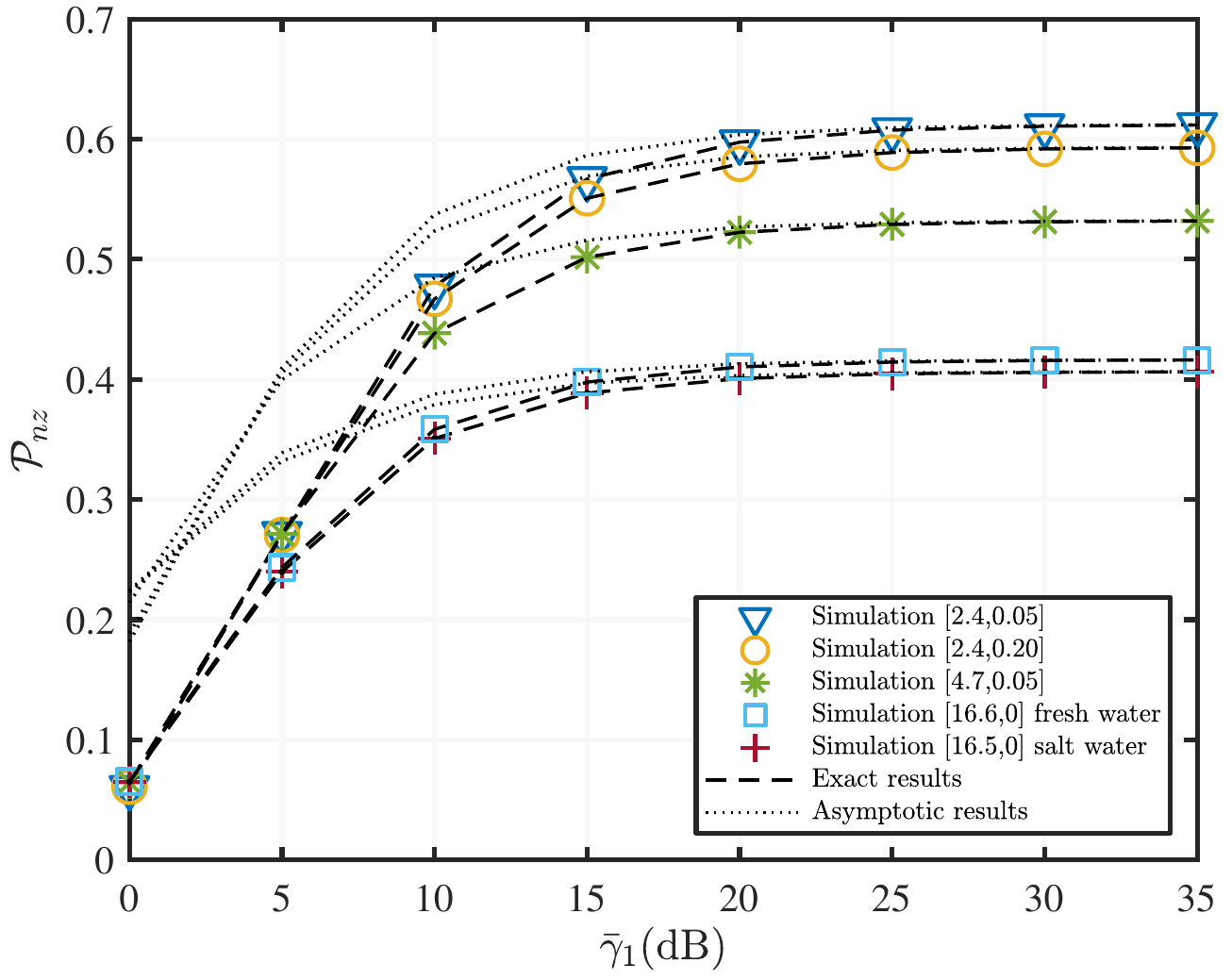}
\caption{$\mathcal{P}_{n z}$ versus $\bar{\gamma}_1$ with various fading parameters when $\alpha=\alpha_e=2.1$, $\mu=\mu_e=1.4$ and $\bar{\gamma}_e=\bar{\gamma}_2=0$ dB}
\label{fig.7}
\end{figure}

\begin{figure}[!htp]
\centering
\includegraphics[width=3.45in]{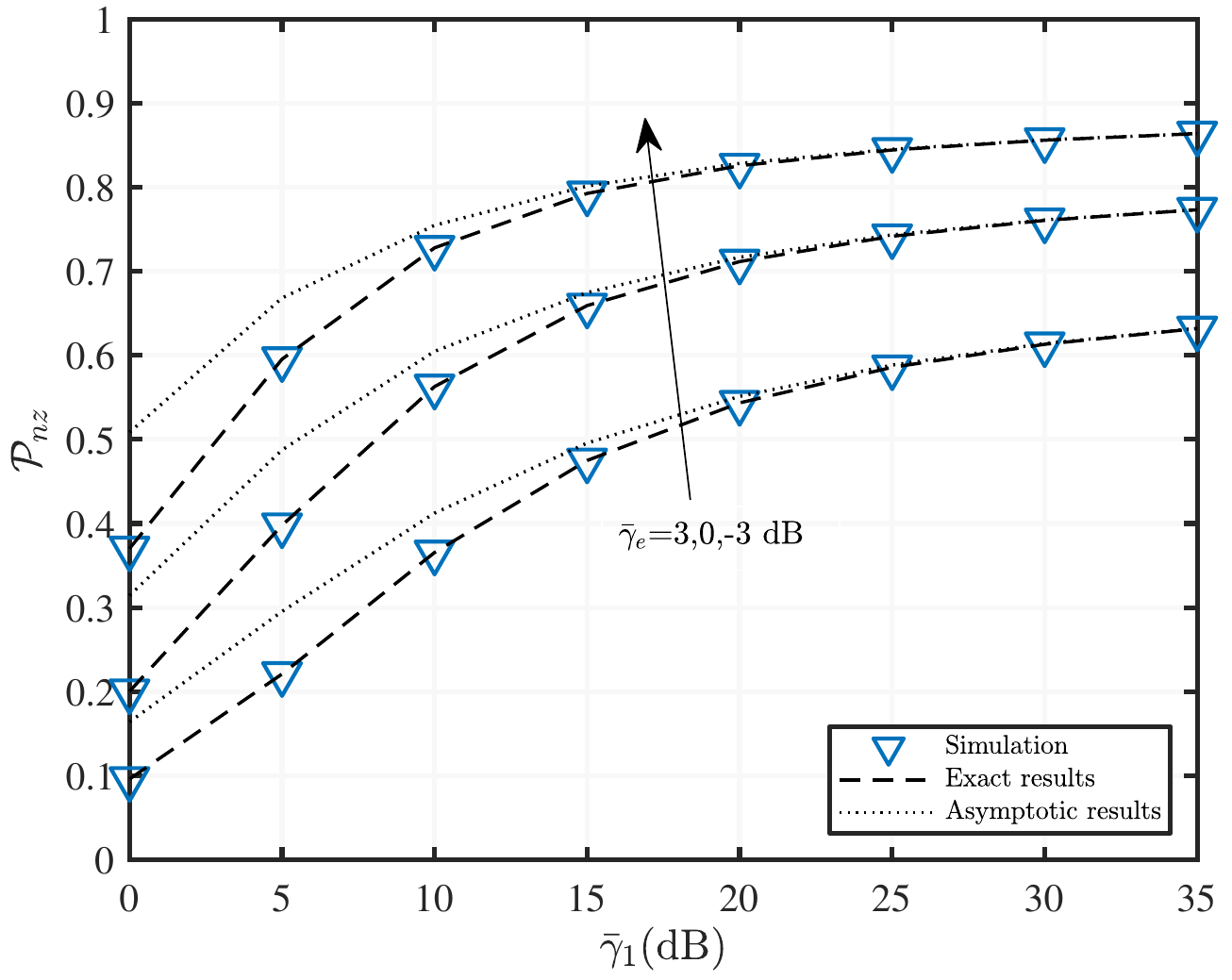}
\caption{$\mathcal{P}_{n z}$ versus $\bar{\gamma}_1$ with various fading parameters when $\alpha=\alpha_e=1.5$, $\mu=\mu_e=0.8$ and $\bar{\gamma}_2=0$ dB}
\label{fig.8}
\end{figure}

\begin{figure}[!htp]
\centering
\includegraphics[width=3.45in]{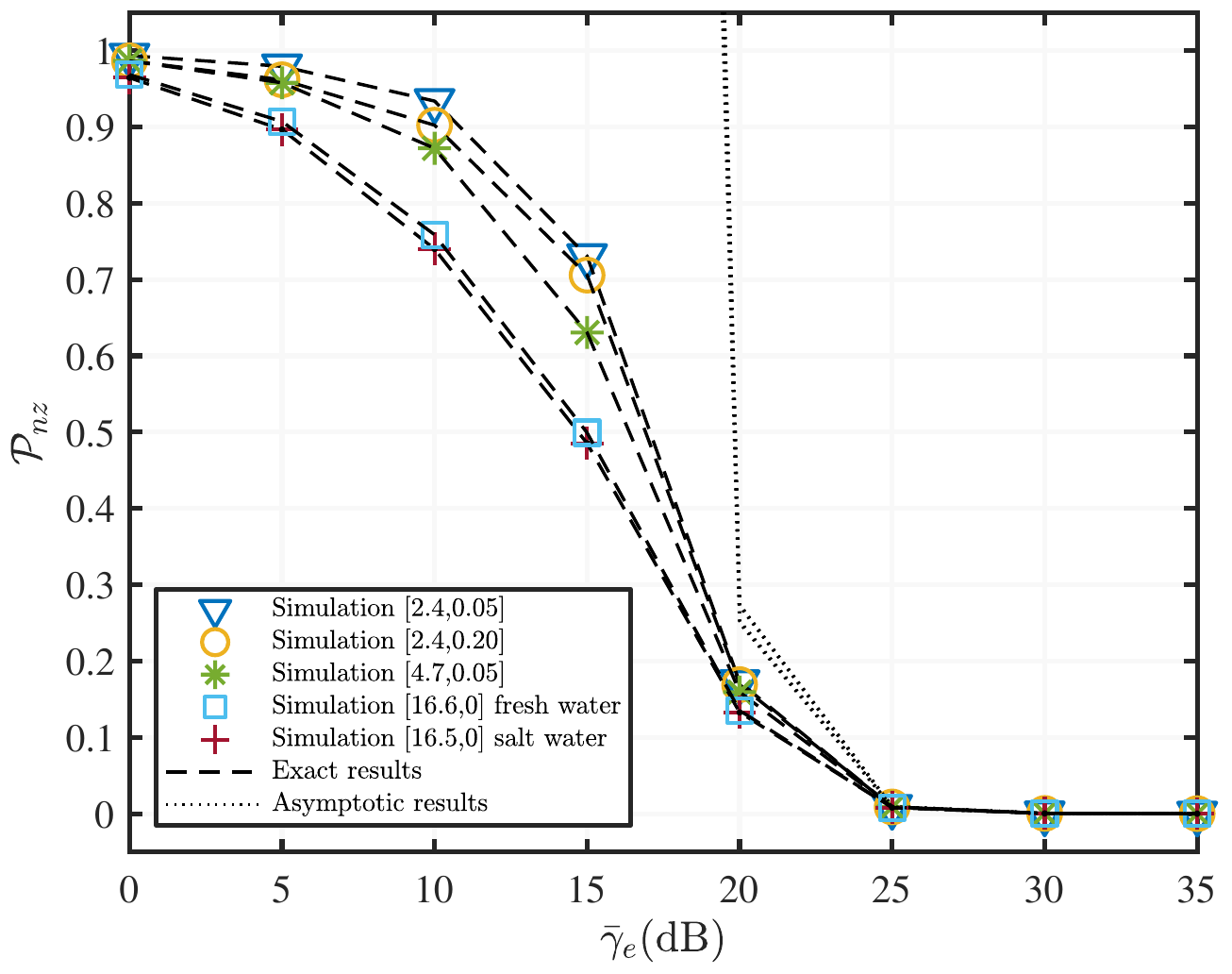}
\caption{$\mathcal{P}_{n z}$ versus $\bar{\gamma}_e$ with various fading parameters when $\alpha=\alpha_e=2.1$, $\mu=\mu_e=1.4$ and $\bar{\gamma}_1=\bar{\gamma}_2=20$ dB}
\label{fig.9}
\end{figure}

\begin{figure}[!htp]
\centering
\includegraphics[width=3.45in]{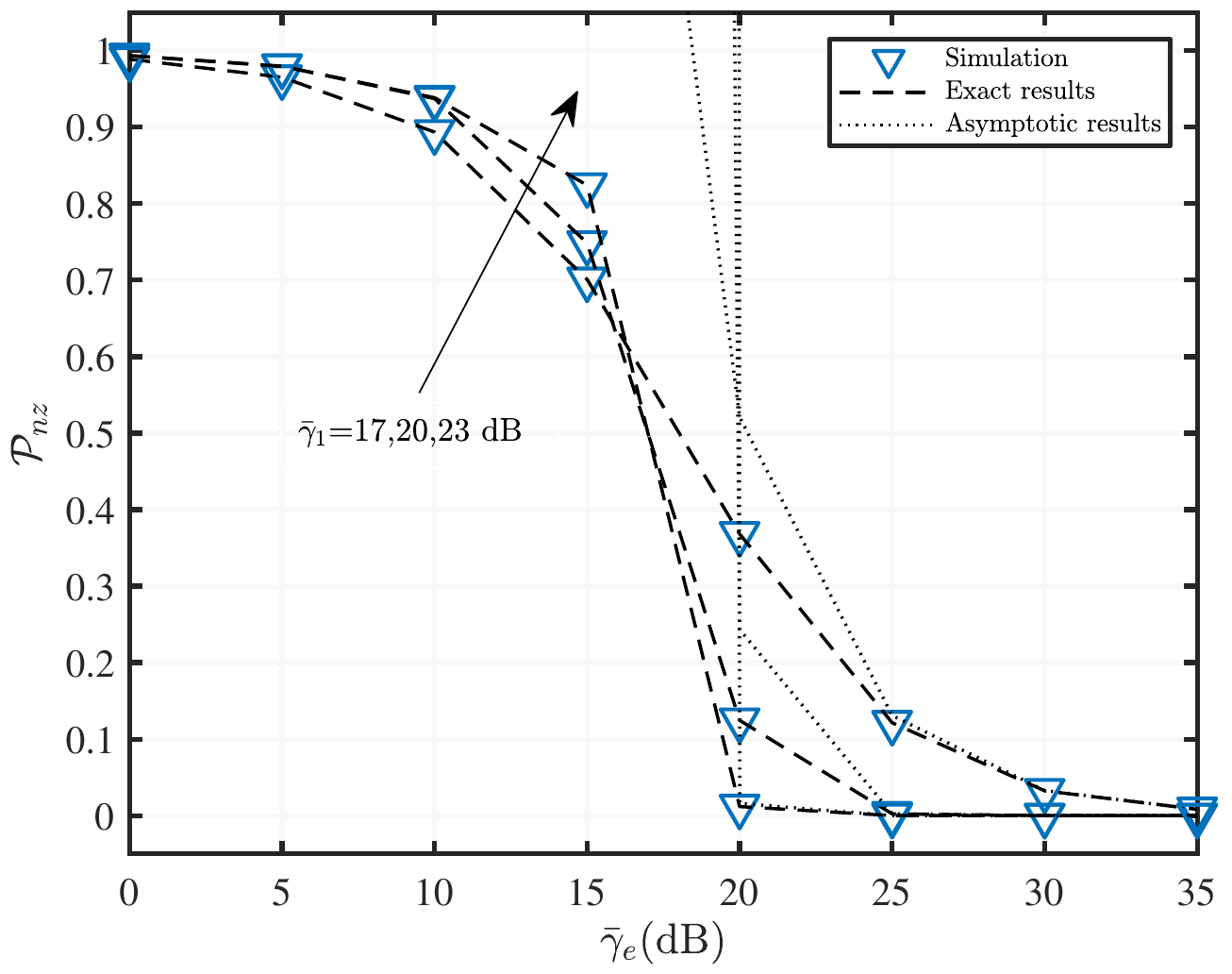}
\caption{$\mathcal{P}_{n z}$ versus $\bar{\gamma}_e$ with various fading parameters when $\alpha=\alpha_e=1.5$, $\mu=\mu_e=0.8$ and $\bar{\gamma}_2=20$ dB}
\label{fig.10}
\end{figure}

Figure \ref{fig.7} shows the effect of the SR link average SNR $\bar{\gamma}_1$ on the PNZ of the mixed RF/UWOC for different UWOC channel parameters. PNZ increases incrementally as $\bar{\gamma}_1$ increases, which indicates an increase in secrecy performance. It can be observed that PNZ decreases as the degree of turbulence increases, i.e., the higher the level of air bubbles and the larger the temperature gradient, the worse the secrecy performance in the system. Additionally, we depict the effects of salinity on UWOC performance in Fig. \ref{fig.7}. The salinity affects the system secrecy performance to a much lesser extent than the level of air bubble and temperature gradient.
This is because the generation and break-up of the air bubbles in the UWOC channels causes dramatic and random fluctuations of the underwater optical signals, which can significantly deteriorate the secrecy performance of the system. Fig. \ref{fig.7} shows that eavesdroppers may benefit from a low UWOC channel quality. On the contrary, in a high quality UWOC channel, the likelihood of an eavesdropper successfully eavesdropping is greatly reduced. Therefore, in practical applications, increasing the channel quality can increase the system transmission capacity and thus improve the system secrecy performance. Fig. \ref{fig.7} also shows that asymptotic results can quickly approach the exact result for poorer channels. For example, for a UWOC channel with channel parameters of [16.5,0], the asymptotic result can achieve a match with the exact value at  $\bar{\gamma}_1\geq 20$ dB. When the channel parameter set is [2.4, 0.05], the asymptotic result can only be accurate at  $\bar{\gamma}_1 >25$ dB. The remaining parameters are set as follows, $\bar{\gamma}_e=\bar{\gamma}_2=0$ dB, $\alpha=\alpha_e=2.1$, $\mu=\mu_e=1.4$.

In Fig. \ref{fig.8}, the RF channel parameters are $\alpha=\alpha_e=1.5$, $\mu=\mu_e=0.8$, and the UWOC channel parameters are [2.4,0.05]. We can explain the curves in Fig. \ref{fig.8} using a principle similar to Fig. \ref{fig.7}. Furthermore, in the case where $\bar{\gamma}_1$ remains unchanged, the smaller the $\bar{\gamma}_e$, the worse is the quality of the eavesdropping channel; therefore, leading to a PNZ performance degradation.

In addition to Fig. \ref{fig.8}, we analyzed the effect of the average SNR $\bar{\gamma}_1$ on the PNZ, as shown by Fig. \ref{fig.9} and Fig.\ref{fig.10}. The difference is that in Fig. \ref{fig.9} $\alpha=\alpha_e=2.1$, $\mu=\mu_e=1.4$ and $\bar{\gamma}_1=\bar{\gamma}_2=20$ dB. whereas the RF channel parameters in Fig. \ref{fig.10} are $\alpha=\alpha_e=1.5$ and $\mu=\mu_e=0.8$. It can be inferred from Fig. \ref{fig.9} and Fig. \ref{fig.10} that the asymptotic result only matches the exact value when $\bar{\gamma}_e$ is large, and the PNZ gradually decreases until it reaches zero.

\section{Conclusion}
We investigated the secrecy performance of a two-hop mixed RF/UWOC communication system using fixed-gain AF relaying. To allow the results to be more generic and applicable to more realistic physical scenarios, we model RF channels using the $\alpha$-$\mu$ distribution, which considers both the nonlinear of the transmission medium and multipath cluster characteristics, and model UWOC channels using the laboratory EGG distribution, which can account for different levels of air bubbles, temperature gradients, and salinity. Closed-form expressions for the PDF and the CDF of the two-hop end-to-end SNR were both derived in terms of the bivariate $H$-function. Based on these results, we  obtained a tight closed-form expression of the lower bound of the SOP and the exact closed-form expression of the PNZ. Furthermore, we also derived asymptotic expressions in simple functions for both SOP and PNZ to allow rapid numerical evaluation. Moreover, based on the asymptotic results, we presented an approach to determine the optimal transmitting power to maximize the energy efficiency, for given target performance of both SOP and PNZ. We fully investigated the effects of various existing phenomena of both RF and UWOC channels on the secrecy performance of the mixed RF/UWOC communication system. Also, our generalized theoretical framework is also applicable to various classical RF and underwater optical channel models including Rayleigh and Nakagami for RF channels and EG and Generalized Gamma for UWOC channels. Our results can be used in practical mixed security RF/UWOC communication systems design. The interesting topics for future work include: (i) to investigate the secrecy performance of a mixed RF/UWOC communication system using a energy-harvesting enabled relay with the aim of improving the system life time; (ii) to investigate the secrecy performance of a mixed RF/UWOC communication system using multiple relays with appropriate relaying selection algorithms.

\appendices
\renewcommand{\theequation}{\thesection.\arabic{equation}} 

\section{Proof of THEOREM 1}
\setcounter{equation}{0}

Using \eqref{gammaeq1}, we write the CDF of the  end-to-end SNR in the following form
\begin{IEEEeqnarray*}{rcl}
	 F_{\gamma_{eq}}(\gamma_{eq}) &=&\int_{0}^{\infty} \operatorname{Pr}\left[\frac{\gamma_{1} \gamma_{2}}{\gamma_{2}+C} \leq \gamma \mid \gamma_{2}\right] f_{\gamma_{2}}\left(\gamma_{2}\right) d \gamma_{2} \\ 
	 &=&1\!\!-\!\!\int_{\gamma}^{\infty} \bar{F}_{\gamma_{2}}\left(\frac{C \gamma}{x-\gamma}\right) f_{\gamma_{1}}(x) d x.\IEEEyesnumber \label{appcdfgammaeq}
\end{IEEEeqnarray*}

Substituting \eqref{ampdf} and \eqref{eggccdf} into \eqref{appcdfgammaeq} and replacing the integral variable $x$ with $z=x+\gamma$, after some simplifications, we can express \eqref{appcdfgammaeq} as
\begin{IEEEeqnarray*}{rcl}
	F_{\gamma _{eq}}\left(\gamma _{eq}\right)=1+I_1+I_2\IEEEyesnumber \label{appcdfgammaeq1}
\end{IEEEeqnarray*}
where
\begin{IEEEeqnarray*}{rcl}
	I_1&=&-\frac{\kappa  (1-\omega )}{\Gamma (a)}\int _0^{\infty }\!\!H_{0,1}^{1,0}\left[(z+\gamma ) \Lambda\left|\!\!\!\begin{array}{c}  \\ \left(-\frac{1}{\alpha }+\mu ,\frac{1}{\alpha }\right) \\\end{array}\right.\!\!\!\!\right] \\
	&&\times H_{1,2}^{2,0}\!\!\left[\frac{b^{-r} C \gamma }{z \mu _r}\middle|\!\!\!\begin{array}{c} (1,1) \\ (0,1),(a,\frac{r}{c}) \\\end{array}\!\!\!\right]dz\IEEEyesnumber\label{i11}
\end{IEEEeqnarray*}
and
\begin{IEEEeqnarray*}{rcl}
	I_2&=&-\kappa r \omega  \int _0^{\infty }H_{0,1}^{1,0}\!\!\left[\frac{C \gamma  \lambda ^{-r}}{z \mu _r}\middle|\!\!\!\begin{array}{c}  \\ (0,r) \\\end{array}\!\!\!\right] \\
	&&\times H_{0,1}^{1,0}\left[(z+\gamma ) \Lambda \left|\begin{array}{c}  \\ \left(-\frac{1}{\alpha }+\mu ,\frac{1}{\alpha }\right) \\\end{array}\right.\right]dz.\IEEEyesnumber\label{i21}
\end{IEEEeqnarray*}

To solve \eqref{i11}, we convert all the $H$-functions in \eqref{i11} into a line integral, and place the integral with respect to $x$ in the innermost part by rearranging the order of multiple integrals. Then, we have
\begin{IEEEeqnarray*}{rcl}
	I_1&=&\frac{\kappa  (1-\omega )}{4 \pi ^2 \Gamma (a)}\int _{\mathcal{L}}^t\frac{\Gamma (t)}{\Gamma (t+1)}\Gamma \left(a+\frac{r t}{c}\right)\left(\frac{b^r\mu _r}{C \gamma }\right)^t\\
	&&\times\!\!\! \int _{\mathcal{L}}^s\!\!\!\Lambda ^{-s} \Gamma \!\!\left(\frac{s}{\alpha }+\mu -\frac{1}{\alpha }\right) \!\!\!\int _0^{\infty }\!\!\!\!\!z^t(z+\gamma )^{-s}\!dzdsdt.\IEEEyesnumber
\end{IEEEeqnarray*}

By utilizing \cite[Eq. (3.197/1)]{i.s.gradshteynTableIntegralsSeries2007} to solve the integration of $z$, after some simplifications and using the definition of the bivariate $H$-function \cite[Eq. (2.57)]{mathaiHFunctionTheoryApplications2010}, we can finally express $I_1$ in \eqref{i11} in the following form
\begin{IEEEeqnarray*}{rcl}
	&&I_1= -\frac{\gamma  \kappa  (1-\omega )  }{\Gamma (a)}\\
	&&\times H_{1,0:0,2;1,1}^{0,1:2,0;0,1}\!\!\left[\!\!\!\begin{array}{c} \frac{b^{-r} C}{\mu _r} \\ \frac{1}{\gamma  \Lambda } \\\end{array}\middle|\!\!\!\begin{array}{ccccc} (2,1,1) &\!\!\!\!\!\! :\!\!\!\!\!\! &   &\!\!\!\!\!\! ;\!\!\!\!\!\! & \left(1+\frac{1}{\alpha }-\mu ,\frac{1}{\alpha }\right) \\
	   &\!\! \!\!\!\!: \!\!\!\!\!\!& (0,1),(a,\frac{r}{c}) &\!\!\!\!\!\! ;\!\!\!\!\!\! & (1,1) \\\end{array}\!\!\!\!\right].\\
	   \IEEEyesnumber \label{i12}
\end{IEEEeqnarray*}

We can solve \eqref{i21} in a similar way as we have solved \eqref{i11}. All $H$-functions are converted to the form of the line integrals and by rearranging the multiple integrals, the integral regarding $z$ is placed in the innermost part of the expression. Then, we have
\begin{IEEEeqnarray*}{rcl}
	I_2&=&\frac{\kappa  r \omega }{4 \pi ^2}\int _{\mathcal{L}}^s\Gamma (r s) \left(\frac{\lambda ^r \mu _r}{C \gamma }\right)^s \int _{\mathcal{L}}^t\Lambda^{-t} \Gamma \left(\frac{t}{\alpha }+\mu -\frac{1}{\alpha }\right) \\
	&&\times\int _0^{\infty }z^s (z+\gamma )^{-t}dzdtds.\IEEEyesnumber
\end{IEEEeqnarray*}

Again, we use \cite[Eq. (3.197/1)]{i.s.gradshteynTableIntegralsSeries2007} to solve the integration regarding $z$. Then use \cite[Eq. (2.57)]{mathaiHFunctionTheoryApplications2010} and some simplification, we obtain the following expression 
\begin{IEEEeqnarray*}{rcl}
	&&I_2=-\gamma  \kappa  r \omega  \\
	&&\times H_{1,0:0,2;1,1}^{0,1:2,0;0,1}\!\!\left[\!\!\!\begin{array}{c} \frac{C \lambda ^{-r}}{\mu _r} \\ \frac{1}{\gamma  \Lambda } \\\end{array}\middle|\!\!\!\begin{array}{ccccc} (2,1,1) &\!\!\!\!\!\! :\!\!\!\!\!\! &   \!\!\!&\!\!\! ; &\!\!\! \!\!\!\left(1+\frac{1}{\alpha }-\mu ,\frac{1}{\alpha }\right) \\
	   &\!\!\!\!\!\! : \!\!\!\!\!\!& (1,1),(0,r)\!\!\! &\!\!\! ; &\!\!\!\!\!\! (1,1) \\\end{array}\!\!\!\right].\\
	   \IEEEyesnumber \label{i22}
\end{IEEEeqnarray*}

Substituting \eqref{i12} and \eqref{i22} into \eqref{appcdfgammaeq1}, we obtian the exact closed-form expression for the CDF as shown by \eqref{gammaeqcdf}.

\section{Proof of THEOREM 2}
\setcounter{equation}{0}
The PDF of the end-to-end SNR can be obtained by using
\begin{IEEEeqnarray*}{rcl}
	 f(\gamma_{eq})=\frac{d F(\gamma_{eq})}{d \gamma_{eq}}.\IEEEyesnumber \label{apppdfgammaeq}
\end{IEEEeqnarray*}

Substituting \eqref{gammaeqcdf} into \eqref{apppdfgammaeq}, after some simplifications, we have
\begin{IEEEeqnarray*}{rcl}
	 f_{\gamma_{eq}}(\gamma_{eq}) &=& \frac{\mathrm{d} J_1}{\mathrm{d} \gamma_{eq}}+\frac{\mathrm{d} J_2}{\mathrm{d} \gamma_{eq}}\IEEEyesnumber\label{appgammaeqpdf} 
\end{IEEEeqnarray*}
where
\begin{IEEEeqnarray*}{rcl}
	J_1&=&\frac{\gamma_{eq}  \kappa  (1-\omega )}{4 \pi ^2 \Gamma (a)}\int _{\mathcal{L}}^t\int _{\mathcal{L}}^s\frac{1}{\Gamma (t)}\left(\frac{1}{\gamma_{eq}  \Lambda}\right)^t \Gamma (-s) \Gamma \left(a-\frac{r s}{c}\right) \\
	&&\times\Gamma (s+t-1) \Gamma \left(\frac{t}{\alpha }+\mu -\frac{1}{\alpha }\right) \left(\frac{b^{-r}C}{\mu _r}\right)^s\!dsdt\IEEEyesnumber
\end{IEEEeqnarray*}
and
\begin{IEEEeqnarray*}{rcl}
	J_2&=&\frac{\gamma_{eq}  \kappa  r \omega }{4 \pi ^2}\!\!\int _{\mathcal{L}}^t\!\int _{\mathcal{L}}^s\!\frac{1}{\Gamma (t)}\left(\frac{1}{\gamma_{eq}  \Lambda }\right)^t\!\!\!\!\Gamma (1\!-\!s) \Gamma (-r s)\Gamma (s+t-1)  \\
	&&\times\Gamma \left(\frac{t}{\alpha }+\mu -\frac{1}{\alpha }\right) \left(\frac{C \lambda ^{-r}}{\mu _r}\right)^sdsdt.\IEEEyesnumber
\end{IEEEeqnarray*}

By enabling the differential operation in \eqref{appgammaeqpdf}, after some rearrangements, we can represent the first and the second terms on the right of the equation \eqref{appgammaeqpdf} as
\begin{IEEEeqnarray*}{rcl}
	\frac{\mathrm{d} J_1}{\mathrm{d} \gamma_{eq}}&=&\frac{\kappa(1-\omega)}{4\pi^2\Gamma(a)}\int _{\mathcal{L}}^t\int _{\mathcal{L}}^s\!\!\!\frac{  (1-t)}{ \Gamma (t)}  C^s \Gamma (-s) (\gamma_{eq}\Lambda) ^{-t}   b^{-r s}\mu _r^{-s} \\
	&&\times\Gamma (s+t-1) \Gamma\!\! \left(\frac{t}{\alpha }+\mu -\frac{1}{\alpha }\right) \!\!\Gamma\!\! \left(a-\frac{r s}{c}\right)\!dsdt\IEEEyesnumber\label{appd1}
\end{IEEEeqnarray*}
and
\begin{IEEEeqnarray*}{rcl}
	\frac{\mathrm{d} J_2}{\mathrm{d} \gamma_{eq}}&=&\frac{\kappa r\omega}{4\pi^{2}}\int _{\mathcal{L}}^t\int _{\mathcal{L}}^s\frac{1}{ \Gamma (t)} (1-t)   C^s \Gamma (1-s) (\gamma_{eq} \Lambda)^{-t}  \lambda^{-r s} \\
	&&\times\mu _r^{-s} \Gamma (-r s) \Gamma (s+t-1) \Gamma \!\!\left(\frac{t}{\alpha }\!+\mu \!-\frac{1}{\alpha }\right)\!dsdt,\IEEEyesnumber \label{appd2}
\end{IEEEeqnarray*}
respectively.

After substitute \eqref{appd1} and \eqref{appd2} to \eqref{appgammaeqpdf}, and use the definition of bivariate $H$-function, we can derive the exact closed-form expression of PDF as shown in \eqref{gammaeqpdf}.

\section{Proof of THEOREM 3}
\setcounter{equation}{0}

Substituting \eqref{ampdf} and \eqref{gammaeqcdf} into \eqref{soplb1}, after some rearrangements, we have
\begin{IEEEeqnarray*}{rcl}
	\text{SOP}_L=1+Q_1+Q_2\IEEEyesnumber \label{appsoplb}
\end{IEEEeqnarray*}
where
\begin{IEEEeqnarray*}{rcl}
	&&Q_1=-\frac{  \Theta  \kappa  (1-\omega ) \kappa _e}{\Gamma (a)}\int _0^{\infty }\gamma H_{0,1}^{1,0}\!\!\left[\gamma  \Lambda _e\middle|\!\!\!\begin{array}{c}   \\ \left(-\frac{1}{\alpha _e}+\mu _e,\frac{1}{\alpha _e}\right) \\\end{array}\!\!\!\right]\\
	&&\times H_{1,0:0,2;1,1}^{0,1:2,0;0,1}\!\!\left[\!\!\!\begin{array}{c} \frac{b^{-r} C}{\mu _r} \\ \frac{1}{\gamma  \Theta  \Lambda } \\\end{array}\!\!\middle|\!\!\!\begin{array}{ccccc} (2,1,1) &\!\!\!\!\!\! : &\!\!\!\!\!   &\!\!\!\!\!\! ; \!\!\!\!\!\!& \left(1+\frac{1}{\alpha }-\mu ,\frac{1}{\alpha }\right) \\
	   &\!\!\!\!\!\! : & \!\!\!\!\!(0,1),(a,\frac{r}{c}) &\!\!\!\!\!\! ;\!\!\!\!\!\! & (1,1) \\\end{array}\!\!\!\!\right]\!\!d\gamma\\
	   \IEEEyesnumber\label{Q11}
\end{IEEEeqnarray*}
and
\begin{IEEEeqnarray*}{rcl}
	&&Q_2=-r \gamma  \Theta  \kappa  \omega  \kappa _e\int _0^{\infty }\gamma  H_{0,1}^{1,0}\!\!\left[\gamma  \Lambda _e\middle|\!\begin{array}{c}   \\ \left(-\frac{1}{\alpha _e}+\mu _e,\frac{1}{\alpha _e}\right) \\\end{array}\!\!\!\right] \\
	&&\times H_{1,0:0,2;1,1}^{0,1:2,0;0,1}\!\!\left[\!\!\!\begin{array}{c} \frac{C \lambda ^{-r}}{\mu _r} \\ \frac{1}{\gamma  \Theta  \Lambda } \\\end{array}\!\!\!\!\middle|\!\!\!\begin{array}{ccccc} (2,1,1)\!\!\! & \!\!\!: \!\!\!&\!\!   \!\!\!\!\!\!& ; \!\!& \!\!\!\left(1+\frac{1}{\alpha }-\mu ,\frac{1}{\alpha }\right) \\
	   \!\!\!&\!\!\! : \!\!\!& \!\!(1,1),\!(0,r) \!\!\!\!\!& ; \!\!& \!\!\!(1,1) \\\end{array}\!\!\!\!\right]\!\!d\gamma.\\
	   \IEEEyesnumber\label{Q21}
\end{IEEEeqnarray*}

To simplify \eqref{Q11} further, we first express the bivariate $H$-functions in \eqref{Q11} into the form of a double line integral, and then place the curve integral regarding $\gamma$ to the innermost level by rearranging \eqref{Q11}, we have
\begin{IEEEeqnarray*}{rcl}
	Q_1&=&\frac{\Theta  \kappa  (1-\omega ) \kappa _e}{4 \pi ^2 \Gamma (a)}\int _{\mathcal{L}}^t\frac{(\Theta  \Lambda )^{-t} }{\Gamma (t)}\Gamma \left(\frac{t}{\alpha}+\mu -\frac{1}{\alpha }\right)\\
	&&\times\int _{\mathcal{L}}^s\Gamma (-s) \Gamma \left(a-\frac{r s}{c}\right) \Gamma (s+t-1) \left(\frac{b^{-r}C}{\mu _r}\right)^s \\
	&&\times\int _0^{\infty }\!\!\!\!\gamma ^{1-t} H_{0,1}^{1,0}\!\!\left[\gamma  \Lambda _e\middle|\!\!\!\begin{array}{c}  \\ \left(-\frac{1}{\alpha _e}+\mu _e,\frac{1}{\alpha _e}\right) \\\end{array}\!\!\!\right]\!d\gamma dsdt.\IEEEyesnumber \label{Q12}
\end{IEEEeqnarray*}

Then, using \cite[Eq. 2.25.2/1]{vermaIntegralsInvolvingMeijer1966}, we can transform \eqref{Q12} into
\begin{IEEEeqnarray*}{rcl}
	Q_1&=&\frac{\Theta  \kappa  (1-\omega ) \kappa _e}{4 \pi ^2 \Gamma (a)}\int _{\mathcal{L}}^t\frac{(\Theta  \Lambda )^{-t} }{\Gamma (t)}\Gamma \left(\frac{t}{\alpha}+\mu -\frac{1}{\alpha }\right) \\
	&&\times\int _{\mathcal{L}}^s\Gamma (-s) \Gamma \left(a-\frac{r s}{c}\right) \Gamma (s+t-1) \\
	&&\times\Gamma \!\left(\frac{2-t}{\alpha_e}+\mu _e-\frac{1}{\alpha _e}\right) \Lambda _e^{t-2} \left(\frac{b^{-r} C}{\mu _r}\right)^s\!\!dsdt.\IEEEyesnumber  \label{Q13}
\end{IEEEeqnarray*}

Finally, converting the double curve integral into bivariate $H$-function using \cite[Eq. (2.57)]{mathaiHFunctionTheoryApplications2010}, after some simplifications, we obtain from \eqref{Q13} in an exact closed-form as
\begin{IEEEeqnarray*}{rcl}
	&&Q_1=-\frac{\Theta  \kappa  (1-\omega ) \kappa _e }{\Gamma (a) \Lambda _e^2}\\
	&&\times\! H_{1,0:1,2;0,2}^{0,1:1,1;2,0}\!\!\!\left[\!\!\!\!\begin{array}{c} \frac{\Lambda _e}{\Theta  \Lambda } \\ \frac{b^{-r} C}{\mu _r} \\\end{array}\!\!\!\middle|\!\!\!\begin{array}{ccccc} (2,\!1,\!1)\!\!\!\!\!\!& :\!\!\!\!\!\!\!& \left(1+\frac{1}{\alpha }-\mu ,\frac{1}{\alpha }\right)\!\! &\!\!\!\!\!\! ;\!\!\!\!\!\! &   \\
	   \!\!\!\!\!\!\!\!& :\!\!\!\!\!\!& \left(\frac{1+\alpha _e \mu _e}{\alpha _e},\!\frac{1}{\alpha _e}\right)\!,(1,\!1)& \!\!\!\!\!\!;\! \!\!\!\!\!& (0,1),(a,\frac{r}{c}) \\\end{array}\!\!\!\!\right].\\
	   \IEEEyesnumber \label{Q14}
\end{IEEEeqnarray*}

To process \eqref{Q21} further, we first convert the bivariate $H$-function in \eqref{Q21} into the form of one double curve integral using \cite[Eq. (2.55)]{mathaiHFunctionTheoryApplications2010}. After placing the line integral of $\gamma$ into the innermost layer, we can transform \eqref{Q21} into
\begin{IEEEeqnarray*}{rcl}
	Q_2&=&\frac{\Theta  \kappa  r \omega  \kappa _e}{4 \pi ^2}\int _{\mathcal{L}}^t\frac{(\Theta  \Lambda )^{-t} }{\Gamma (t)}\Gamma \left(\frac{t}{\alpha }+\mu -\frac{1}{\alpha}\right)\\
	&&\times\int _{\mathcal{L}}^s\Gamma (1-s) \Gamma (-r s) \Gamma (s+t-1) \left(\frac{C \lambda ^{-r}}{\mu _r}\right)^s\\
	&&\times\int _0^{\infty}\!\!\!\!\gamma ^{1-t} H_{0,1}^{1,0}\!\!\left[\gamma  \Lambda _e\middle|\!\!\!\!\begin{array}{c}  \\ \left(-\frac{1}{\alpha _e}+\mu _e,\frac{1}{\alpha _e}\right) \\\end{array}\!\!\!\!\right]\!\!d\gamma dsdt.\IEEEyesnumber \label{Q22}
\end{IEEEeqnarray*}

Subsequently, using \cite[Eq. 2.25.2/1]{vermaIntegralsInvolvingMeijer1966}, we express the innermost curve integral in \eqref{Q22} in the form of the product of Gamma functions. Then, we can write \eqref{Q22} as
\begin{IEEEeqnarray*}{rcl}
	Q_2&=&\frac{\Theta  \kappa  r \omega  \kappa _e}{4 \pi ^2}\int _{\mathcal{L}}^t\frac{(\Theta  \Lambda )^{-t} }{\Gamma (t)}\Gamma \left(\frac{t}{\alpha }+\mu -\frac{1}{\alpha}\right)\\
	&&\times\int _{\mathcal{L}}^s\Gamma (1-s) \Gamma (-r s) \Gamma (s+t-1) \\
	&&\times\Gamma \!\left(\frac{2-t}{\alpha _e}+\mu _e-\frac{1}{\alpha _e}\right)\Lambda _e^{t-2}\! \left(\frac{C \lambda ^{-r}}{\mu _r}\right)^s\!\!\!dsdt.\IEEEyesnumber \label{Q23}
\end{IEEEeqnarray*}

Subsequently, based on the same steps as for the derivation of \eqref{Q14}, eq. \eqref{Q23} can be expressed in exact closed-form as
\begin{IEEEeqnarray*}{rcl}
	&&Q_2=-\frac{\Theta  \kappa  r \omega  \kappa _e }{\Lambda _e^2}\\
	&&\times H_{1,0:1,2;0,2}^{0,1:1,1;2,0}\!\!\left[\!\!\!\!\begin{array}{c} \frac{\Lambda _e}{\Theta  \Lambda } \\ \frac{C \lambda ^{-r}}{\mu _r} \\\end{array}\!\!\!\!\middle|\!\!\!\!\begin{array}{ccccc} (2,\!1,\!1)\!\!\!\!\!\! & : &\!\!\!\!\! \!\!\left(1+\frac{1}{\alpha }-\mu ,\frac{1}{\alpha }\right)\!\! \!\!\!\!\!\!& ;\!\!\!\!\!\! & \!  \\
	  \!\!\!\!\!\!\!\!\!& : & \!\!\!\!\!\!\!\left(\frac{1+\alpha _e \mu _e}{\alpha _e},\frac{1}{\alpha _e}\right)\!,\!(1,1) \!\!\!\!\!\!& ;\!\! \!\!\!\!& (1,1),\!(0,r) \\\end{array}\!\!\!\!\right].\\
	   \IEEEyesnumber \label{Q24}
\end{IEEEeqnarray*}

After substituting \eqref{Q14} and \eqref{Q24} into \eqref{appsoplb}, we can finally obtain the closed-form expression of $\text{SOP}_L$ in \eqref{soplb}.

\section{Proof of Corollary 3.1}
\setcounter{equation}{0}
To derive the asymptotic expression of $\text{SOP}$, we need to derive the asymptotic expressions of the first and the second bivariate $H$-function on the right-hand side of \eqref{soplb}, which are denoted by $O_1$ and $O_2$, respectively. We consider two cases: (a) $\gamma_1 \to \infty$ and (b) $\gamma_e \to \infty$.

\subsection{Case \texorpdfstring{$\gamma_1 \to \infty$}{}}
For the case $\gamma_1 \to \infty$, we first focus on deriving asymptotic expression for $O_1$. Observe that as $\gamma_1$ tends to infinity, $\frac{\theta \Lambda}{\Lambda e}$ tends to zero. Thus, we first express the bivariate $H$-function in the form of one double curve integral, and express the curve integral containing $\frac{\theta \Lambda}{\Lambda e}$ in the form of an $H$- function. Then, we have  
\begin{IEEEeqnarray*}{rcl}
	O_1&=&\frac{i \Theta  \kappa  (1-\omega ) \kappa _e}{2 \pi  \Gamma (a) \Lambda _e^2}\int _{\mathcal{L}}^t  \Gamma (-t) \Gamma \left(a-\frac{r t}{c}\right)\left(\frac{b^{-r} C}{\mu _r}\right)^t \\
	&&\times H_{2,2}^{2,1}\!\!\left[\frac{\Theta  \Lambda }{\Lambda _e}\middle|\!\!\!\begin{array}{c} \left(1-\frac{1}{\alpha _e}-\mu _e,\frac{1}{\alpha _e}\right),(0,1) \\ (-1+t,1),(-\frac{1}{\alpha }+\mu ,\frac{1}{\alpha }) \\\end{array}\!\!\!\right]dt.\IEEEyesnumber \label{O11}
\end{IEEEeqnarray*}

It is easy to observe that the $H$-function in \eqref{O11} contains two poles: $(1-t)$ and $(1-\alpha\mu)$. According to \cite{alhennawiClosedFormExactAsymptotic2016}, when the argument tends to zero, the asymptotic value of the $H$-function can be expressed as the residue of the closest pole to the left of the integration path $l$. Therefore, by utilizing  \cite[Eq. (1.8.4)]{kilbasHtransformsTheoryApplications2004}, we can express \eqref{O11} as
\begin{IEEEeqnarray*}{rcl}
	O_1&=&\frac{i \kappa  (1-\omega ) \kappa _e}{2 \pi  \Lambda  \Gamma (a) \Lambda _e}\int _{\mathcal{L}}^t\frac{\Gamma (-t) }{\Gamma(1-t)}\Gamma \left(a-\frac{r t}{c}\right)\Gamma \left(\mu -\frac{t}{\alpha }\right) \\
	&&\times\Gamma \left(\frac{t}{\alpha _e}+\mu _e\right) \left(\frac{b^{-r} C \Theta  \Lambda }{\Lambda _e \mu _r}\right)^tdt.\IEEEyesnumber \label{O12}
\end{IEEEeqnarray*}

Following some simplifications, and using the definition of the $H$-function, we can transform \eqref{O12} into the following form
\begin{IEEEeqnarray*}{rcl}
	O_1\!\!=\!\!-\frac{\kappa  (1-\omega ) \kappa _e }{\Lambda  \Gamma (a) \Lambda _e}H_{3,2}^{1,3}\!\!\left[\!\frac{b^r \Lambda _e \mu _r}{C \Theta  \Lambda }\!\!\middle|\!\!\!\begin{array}{c} (1,\!1),\!(1-a,\frac{r}{c}),(1-\mu ,\frac{1}{\alpha }) \\ \!\!\left(\mu _e,\frac{1}{\alpha _e}\right),(0,1) \\\end{array}\!\!\!\!\right].\\
	\IEEEyesnumber \label{O13}
\end{IEEEeqnarray*}

Next, we derive the asymptotic expression for $O_2$. Observing that $O_2$ and $O_1$ have a similar structure, we can readily transform $O_2$ into the following form
\begin{IEEEeqnarray*}{rcl}
	O_2&=&-\frac{i \Theta  \kappa  r \omega  \kappa _e}{2 \pi  \Lambda _e^2}\int _{\mathcal{L}}^t  \Gamma (1-t) \Gamma (-r t) \left(\frac{C \lambda ^{-r}}{\mu_r}\right)^t\\ &&\times H_{2,2}^{2,1}\!\!\left[\frac{\Theta  \Lambda }{\Lambda _e}\middle|\!\!\!\begin{array}{c} \left(1-\frac{1}{\alpha _e}-\mu _e,\frac{1}{\alpha _e}\right),(0,1) \\ (-1+t,1),(-\frac{1}{\alpha }+\mu ,\frac{1}{\alpha }) \\\end{array}\!\!\!\right]dt.\\
	\IEEEyesnumber \label{O21}
\end{IEEEeqnarray*}

Similarly, we again use the residue of the pole $(1-t)$ to represent the asymptotic value of the $H$-function in \eqref{O21} as the argument tends to zero. Then, we have
\begin{IEEEeqnarray*}{rcl}
	O_2&=&\frac{i \kappa  r \omega  \kappa _e}{2 \pi  \Lambda  \Lambda _e}\int _{\mathcal{L}}^t\Gamma (-r t) \Gamma \left(\mu -\frac{t}{\alpha }\right) \Gamma\left(\frac{t}{\alpha _e}+\mu _e\right)\\
	&&\times \left(\frac{C \Theta  \lambda ^{-r} \Lambda }{\Lambda _e \mu _r}\right)^tdt.\IEEEyesnumber\label{O22}
\end{IEEEeqnarray*}

By using the definition of the $H$-function, we can transform \eqref{O22} into the following form
\begin{IEEEeqnarray*}{rcl}
	O_2=-\frac{\kappa  r \omega  \kappa _e }{\Lambda  \Lambda _e}H_{2,1}^{1,2}\!\!\left[\frac{\lambda ^r \Lambda _e \mu _r}{C \Theta  \Lambda }\middle|\!\!\!\begin{array}{c} (1,r),(1-\mu ,\frac{1}{\alpha }) \\ \left(\mu _e,\frac{1}{\alpha _e}\right) \\\end{array}\!\!\!\right].\IEEEyesnumber \label{O23}
\end{IEEEeqnarray*}

Substituting \eqref{O13} and \eqref{O23} into \eqref{soplb}, we obtain the asymptotic expression for SOP for the case $\gamma_1 \to \infty$ as shown in \eqref{sopa}.

\subsection{Case \texorpdfstring{$\gamma_e \to \infty$}{}}
Now, we focus on the case $\gamma_e \to \infty$. Obviously, as $\gamma_e$ tends to infinity, $\frac{\theta \Lambda}{\Lambda e}$ tends to infinity. Thus, using \cite[Eq. (1.5.9)]{kilbasHtransformsTheoryApplications2004} and a similar approach to that used in case $\gamma_1 \to \infty$, we can easily obtain closed-form expressions for $O_1$ and $O_2$ for case $\gamma_e \to \infty$, as
\begin{IEEEeqnarray*}{rcl}
	O_1&=&-\frac{(1-\omega ) \alpha _e}{\Gamma (a) \Gamma (\mu ) \Gamma \left(\mu _e\right) \Gamma \left(\alpha _e \mu _e+1\right)}\Gamma\left(\mu +\frac{\alpha _e \mu _e}{\alpha }\right) \\
	&&\times\left(\frac{\Lambda _e}{\Theta  \Lambda }\right)^{\alpha _e \mu _e} H_{2,1}^{1,2}\!\!\left[\frac{b^r\mu _r}{C}\middle|\!\!\!\begin{array}{c} (1,1),(1-a,\frac{r}{c}) \\ \left(\alpha _e \mu _e,1\right) \\\end{array}\!\!\!\right] \label{Oe13}
\end{IEEEeqnarray*}
and
\begin{IEEEeqnarray*}{rcl}
	O_2&=&-\frac{r \omega  \alpha _e}{\Gamma (\mu ) \Gamma \left(\mu _e\right) \Gamma \left(\alpha _e \mu _e+1\right)}\Gamma \left(\mu +\frac{\alpha_e \mu _e}{\alpha }\right) \\
	&&\times\left(\frac{\Lambda _e}{\Theta  \Lambda }\right)^{\alpha _e \mu _e} H_{2,1}^{1,2}\!\!\left[\frac{\lambda ^r \mu _r}{C}\middle|\!\!\!\begin{array}{c} (0,1),(1,r) \\ \left(\alpha _e \mu _e,1\right) \\\end{array}\!\!\!\right] \label{Oe23}
\end{IEEEeqnarray*}
respectively.

Substituting \eqref{Oe13} and \eqref{Oe23} into \eqref{soplb}, we obtain the asymptotic expression for SOP for the case $\gamma_e \to \infty$ as shown in \eqref{sopae}.

\section{Proof of THEOREM 4}
\setcounter{equation}{0}
Substituting \eqref{amcdfform2} and \eqref{gammaeqpdf} into \eqref{pnz1}, after some simplifications, we can transform the PNZ expression in \eqref{pnz1} to
\begin{IEEEeqnarray*}{rcl}
	\mathcal{P}_{n z}=T_1+T_2\IEEEyesnumber
\end{IEEEeqnarray*}
where
\begin{IEEEeqnarray*}{rcl}
	&&T_1=\int _0^{\infty }\frac{\kappa  (1-\omega ) \kappa _e}{\Gamma (a) \Lambda _e}H_{1,2}^{1,1}\!\!\left[\gamma  \Lambda _e\middle|\!\!\!\begin{array}{c} (1,1) \\ \left(\mu _e,\frac{1}{\alpha _e}\right),(0,1) \\\end{array}\!\!\!\right]\\
	&&\times H_{1,0:1,1;0,2}^{0,1:0,1;2,0}\!\!\left[\!\!\!\!\begin{array}{c} \frac{1}{\gamma  \Lambda } \\ \frac{b^{-r} C}{\mu _r} \\\end{array}\!\!\middle|\!\!\!\begin{array}{ccccc} (2,1,1)\!\!\!\!\!\! & : \!\!\!\!\!\!& \left(1+\frac{1}{\alpha }-\mu ,\frac{1}{\alpha }\right) \!\!\!\!\!& ; & \!\!\!\!\!\!  \\
	  \!\!\!\!\!\! & :\!\!\!\!\!\! & (2,1)\!\!\!\!\! & ; &\!\!\!\!\! (0,1),\!(a,\frac{r}{c}) \\\end{array}\!\!\!\!\right]\!\!d\gamma\\
	  \IEEEyesnumber\label{t11}
\end{IEEEeqnarray*}
and
\begin{IEEEeqnarray*}{rcl}
	&&T_2=\int _0^{\infty }\frac{r \kappa  \omega  \kappa _e}{\Lambda _e}H_{1,2}^{1,1}\!\!\left[\gamma  \Lambda _e\middle|\!\!\!\begin{array}{c} (1,1) \\ \left(\mu _e,\frac{1}{\alpha _e}\right),(0,1) \\\end{array}\!\!\!\right] \\
	&&\times H_{1,0:1,1;0,2}^{0,1:0,1;2,0}\!\!\left[\!\!\!\!\begin{array}{c} \frac{1}{\gamma  \Lambda } \\ \frac{C \lambda ^{-r}}{\mu _r} \\\end{array}\!\!\middle|\!\!\!\begin{array}{ccccc} (2,1,1)\!\!\!\!\!\! & :\!\!\!\!\!\! & \left(1+\frac{1}{\alpha }-\mu ,\frac{1}{\alpha }\right) \!\!\!\!\!\!& ; &\!\!\!\!\!   \\
	   \!\!\!\!\!\!& :\!\!\!\!\!\! & (2,1)\!\!\!\!\!\! & ; &\!\!\!\!\! (1,1),(0,r) \\\end{array}\!\!\!\!\right]\!\!d\gamma.\\
	   \IEEEyesnumber\label{t21}
\end{IEEEeqnarray*}

Representing the bivariate $H$-function into the form of one double line integral and moving the line integral regarding $\gamma$ to the innermost level, we can re-write \eqref{t11} as
\begin{IEEEeqnarray*}{rcl}
	T_1&=&-\frac{\kappa  (1-\omega ) \kappa _e}{4 \pi ^2 \Gamma (a) \Lambda _e}\int _{\mathcal{L}}^t\Gamma (-t) \Gamma \left(a-\frac{r t}{c}\right) \left(\frac{b^{-r}C}{\mu _r}\right)^{t}\\
	&&\times\int _{\mathcal{L}}^s\frac{\Lambda ^{-s}  }{\Gamma(s-1)}\Gamma (s+t-1) \Gamma \left(\frac{s}{\alpha }+\mu -\frac{1}{\alpha }\right)\\
	&&\times\int _0^{\infty }\!\!\gamma ^{-s} H_{1,2}^{1,1}\!\!\left[\gamma  \Lambda _e\middle|\!\!\!\begin{array}{c} (1,1) \\ \left(\mu _e,\frac{1}{\alpha _e}\right),(0,1) \\\end{array}\!\!\!\right]\!d\gamma dsdt.\IEEEyesnumber \label{t12}
\end{IEEEeqnarray*}

Afterwards, using the same technique as that used for deducing \eqref{Q14} and \eqref{Q24}, we can express \eqref{t12} as
\begin{IEEEeqnarray*}{rcl}
	&&T_1=\frac{\kappa  (1-\omega ) \kappa _e }{\Gamma (a) \Lambda _e^2}\\
	&&\times \!H_{1,0:0,2;1,2}^{0,1:2,0;1,1}\!\!\left[\!\!\!\!\begin{array}{c} \frac{b^{-r} C}{\mu _r} \\ \frac{\Lambda _e}{\Lambda } \\\end{array}\!\!\!\middle|\!\!\!\begin{array}{ccccc} (2,\!1,\!1) \!\!\!\!\!\!& : &\!\!\!\!\!\!   &\!\!\!\!\!\! ;\!\!\!\!\!\! & \left(1+\frac{1}{\alpha }-\mu ,\frac{1}{\alpha }\right) \!\!\!\\
	   \!\!\!\!\!\!\!\!& : &\!\!\!\!\! (0,\!1),\!(a,\!\frac{r}{c}) &\!\!\!\!\! ;\!\!\!\!\!\! & \left(\frac{1+\alpha _e \mu _e}{\alpha _e},\frac{1}{\alpha _e}\right)\!,\!(1,\!1) \\\end{array}\!\!\!\!\right].\\
	   \IEEEyesnumber
\end{IEEEeqnarray*}

Similarly, $T_2$ in \eqref{t21} can be transformed into
\begin{IEEEeqnarray*}{rcl}
	T_2&=&-\frac{\kappa  r \omega  \kappa _e}{4 \pi ^2 \Lambda _e}\int _{\mathcal{L}}^t\Gamma (1-t) \Gamma (-r t) \left(\frac{C \lambda ^{-r}}{\mu _r}\right)^t\\
	&&\times\int _{\mathcal{L}}^s\frac{\Lambda ^{-s} }{\Gamma (s-1)}\Gamma (s+t-1) \Gamma \left(\frac{s}{\alpha }+\mu -\frac{1}{\alpha }\right)\\
	&&\times\int _0^{\infty}\!\!\!\!\gamma ^{-s} H_{1,2}^{1,1}\!\!\left[\gamma  \Lambda _e\middle|\!\!\!\begin{array}{c} (1,1) \\ \left(\mu _e,\frac{1}{\alpha _e}\right),(0,1) \\\end{array}\!\!\!\right]\!d\gamma dsdt\IEEEyesnumber
\end{IEEEeqnarray*}
which is then expressed as
\begin{IEEEeqnarray*}{rcl}
	&&T_2=\frac{\kappa  r \omega  \kappa _e }{\Lambda _e^2}\\
	&&\times H_{1,0:0,2;1,2}^{0,1:2,0;1,1}\!\!\left[\!\!\!\!\begin{array}{c} \frac{C \lambda ^{-r}}{\mu _r} \\ \frac{\Lambda _e}{\Lambda } \\\end{array}\!\!\!\!\middle|\!\!\!\begin{array}{ccccc} (2,\!1,\!1)\!\!\!\!\! & : &\!\!\!\!  \!\!\!\!\!\!\!\! & ; \!\!\!\!\!& \left(1+\frac{1}{\alpha }-\mu ,\frac{1}{\alpha }\right)\!\!\! \\
	  \!\!\!\!\!\!\! & : &\!\!\!\!\! (1,\!1),\!(0,r) \!\!\!\!\!& ; \!\!\!\!\!& \left(\frac{1+\alpha _e \mu _e}{\alpha _e},\frac{1}{\alpha _e}\right)\!,\!(1,\!1) \\\end{array}\!\!\!\!\right]\\
	   \IEEEyesnumber
\end{IEEEeqnarray*}
using \cite[Eq. 2.25.2/1]{vermaIntegralsInvolvingMeijer1966}. 

\bibliographystyle{IEEEtran}
\bibliography{IEEEabrv,153.bib}
\end{document}